\newtheorem{thm}{Theorem}[section]
\newtheorem{remark}[thm]{Remark}
\newtheorem{theorem}[thm]{Theorem}
\DeclareMathOperator{\tr}{tr}
\DeclareMathOperator{\str}{str}
\DeclareMathOperator{\sdet}{sdet}
\begin{document}

\begin{titlepage}

\begin{center}

{\Large \bf Entwining Yang-Baxter maps over Grassmann algebras}

\vskip 1.5cm

{{\bf P. Adamopoulou$^{\star}$ and G. Papamikos$^{\ast}$ }} 

\vskip 0.8cm

{\footnotesize
$^{\star}$ Maxwell Institute for Mathematical Sciences and Department of Mathematics, \\ Heriot-Watt University, Edinburgh, UK}
\\
{\footnotesize
$^{\ast}$ School of Mathematics, Statistics and Actuarial Science, University of Essex, UK}

\vskip 0.5cm

{\footnotesize {\tt E-mail: p.adamopoulou@hw.ac.uk,  g.papamikos@essex.ac.uk }}\\

\end{center}

\vskip 2.0cm

\begin{abstract}
\noindent 
We construct novel solutions to the set-theoretical entwining Yang-Baxter equation. These solutions are birational maps involving  non-commutative dynamical variables which are elements of the Grassmann algebra of order $n$. The maps arise from refactorisation problems of Lax supermatrices associated to a nonlinear Schr\"odinger equation. In this non-commutative setting, we construct a spectral curve associated to each of the obtained maps using the characteristic function of its monodromy supermatrix. We find generating functions of invariants (first integrals) for the entwining Yang-Baxter maps from the moduli of the spectral curves. Moreover, we show that a hierarchy of birational entwining Yang-Baxter maps with commutative variables can be obtained by fixing the order $n$ of the Grassmann algebra. We present the members of the hierarchy in the case $n=1$ (dual numbers) and $n=2$, and discuss their dynamical and integrability properties, such as Lax matrices, invariants, and measure preservation. 
\end{abstract}

\

\hspace{.2cm} \textbf{Mathematics Subject Classification:} 16T25, 15A75, 37J70

\hspace{.2cm} \textbf{Keywords:} Yang--Baxter equations, birational maps, Grassmann algebras, 

\hspace{.2cm} Lax matrices, discrete dynamical systems

\vfill

\end{titlepage}

\section{Introduction}

The first appearances of the  Yang-Baxter (YB) equation can be traced back to the study of quantum many-body systems and exactly solvable models in statistical mechanics \cite{McGuire1964, Yang1967, Baxter}. After that, the YB equation appeared in a broad range of different fields, from quantum field theory and quantum inverse scattering method to gauge theory, and quantum groups. See for example \cite{Jimbo1989, YB_Jimbo}, and references therein, for an introduction and a historical overview of research in relation to the equation at the end of the previous century. Naturally, an intensive focus on finding and classifying solutions to the equation followed \cite{Hietarinta1997, Sklyanin1988}. Originally, the focus was on finding solutions of the YB equation 
\begin{equation}\label{st YB intro}
\mathcal{R}^{12}\mathcal{R}^{13}\mathcal{R}^{23} = \mathcal{R}^{23}\mathcal{R}^{13}\mathcal{R}^{12}
\end{equation}
that are linear maps $\mathcal{R}:V\otimes V\rightarrow V\otimes V$, where $V$ is a $\mathbb{F}$-vector space. Here, $\mathcal{R}^{13}$, for example, denotes the action of $\mathcal{R}$ on the first and third copy of the triple tensor product $V\otimes V\otimes V$. The study of another class of solutions to the YB equation  was proposed by V. Drinfeld in \cite{Drinfeld1992}, where now these solutions are maps $\mathcal{R}:A\times A \rightarrow A\times A$ and $A$ can be any set. Such solutions are often called set-theoretical solutions or Yang-Baxter maps, with the latter term introduced by Veselov in \cite{Veselov2003}. A plethora of works on YB maps has followed since then in relation to  integrable lattice equations \cite{DiscreteBook, PapTongas2007, Nimmo2010, KassotakisNies2012} and the concept of consistency around the cube \cite{Nijhoff2002, BobenkoSuris2002b}, soliton interactions \cite{AblowPrinari2006, Caudrelier2014, dimakis2019, Papamikos2016, CaudrelierGkogkouPrinari2023}, simplex equations \cite{Hietarinta1997, Dimakis2015}, discrete dynamical systems \cite{JoshiKassotakis2019, Veselov2003}, geometric crystals \cite{Etingof2003}, the theory of braces \cite{Rump2007, DoikouSmoktunowicz2023, DoikouRybolowicz2023}, and many other fields. See also \cite{ABS2004},  \cite{PTSV2010} and \cite{Etingof1999} for related classifications. A well-known connection of equation \eqref{st YB intro} is with the braid relation
\begin{equation}\label{Braid}
\mathcal{B}^{12}\mathcal{B}^{13}\mathcal{B}^{12}=\mathcal{B}^{13}\mathcal{B}^{12}\mathcal{B}^{13}\,,
\end{equation}
which is central in knot theory \cite{kasselturaev}. In particular, if $\mathcal{R}$ is a solution to the YB equation \eqref{st YB intro} then $\mathcal{B}=\pi \mathcal{R}$, where $\pi$ is the flip map over $V\otimes V$,  satisfies the braid relation \eqref{Braid}. In recent years, a lot of attention has been focused on the study of solutions to the YB equation over associative algebras, i.e. in not necessarily commutative settings \cite{Doliwa2014, KRK2018, grahovski2016, KRM2016, Kassotakis22, BobenkoSuris2002}. For example, solutions to equation \eqref{st YB intro} over Grassmann algebras were first constructed in \cite{grahovski2016, KRM2016} and later in \cite{KRK2018}. In \cite{AKRP2021}, the authors studied an extension of a well-known YB map, the Adler map, over Grassmann algebras and discussed the Liouville integrability of this map in the case of dual numbers. It is worth mentioning that the study of the integrability of maps over associative but not commutative algebras was proposed as an open problem in \cite{bolsinov2018}.

A generalisation of \eqref{st YB intro} originating in the study of quantum integrable systems, see for example \cite{NijhoffCapelPap1992, Hlavaty1997, AvanDoikouRolletNagy2004}, is given by the following equation
\begin{equation} \label{eYB intro}
\mathcal{S}^{12} \mathcal{R}^{13} \mathcal{T}^{23} = \mathcal{T}^{23 } \mathcal{R}^{13}  \mathcal{S}^{12} \,.
\end{equation}
A triplet of maps $\mathcal{S}$,  $\mathcal{R}$, $\mathcal{T}$ satisfying \eqref{eYB intro} were first derived in \cite{KP2011}, inspired by the work in \cite{Brzezinski2005}, and the name entwining YB equation was introduced for \eqref{eYB intro}. Entwining YB maps have been constructed using certain symmetries of YB maps \cite{kassotakis2019}, and have also appeared in connection to soliton interactions of a matrix KP equation \cite{dimakis2020}, and classical star-triangle relations \cite{kels2023}.  One of the outcomes of the current paper is the construction of a triplet of birational maps satisfying equation \eqref{eYB intro}, which act on noncommutative Grassmann variables. To our knowledge this is the first time that such type of entwining YB maps appear in the literature. 

The structure of the paper is as follows: Section \ref{sec: Prelim} introduces the notation used in the paper, as well as necessary background material. We define the Grassmann algebra $\Gamma(n)$ as well as the appropriate notions of trace, determinant and characteristic rational function for matrices with entries in $\Gamma(n)$. Further, we introduce the parametric entwining YB equation and define the concept of a Lax triple associated to solutions of the equation. In Section \ref{sec: YB maps} we derive novel birational parametric entwining YB maps, which are related to a Grassmann nonlinear Schr\"{o}dinger (NLS) equation. Moreover, we use monodromy supermatrices associated to the Lax triple to construct invariants for each of the maps. In Sections \ref{sec: n1} and \ref{sec: n2} we consider the cases of Grassmann algebras with one and two generators, respectively. In this way we construct the first two members of a hierarchy of entwining YB maps with commutative variables. For each such triplet of maps we derive a Lax representation and several invariants, and we also prove that the maps are measure preserving. Finally, in Section \ref{sec: concl} we offer some concluding remarks in relation to this work, and discuss some directions of future work.

\section{Preliminaries}\label{sec: Prelim}

\subsection{Grassmann algebras} 

We denote by $\Gamma(n)$ the Grassmann algebra of order $n$ over a field $\mathbb{F}$ of characteristic zero (such as $\mathbb{R}$ or $\mathbb{C}$). $\Gamma(n)$  is an associative algebra with unit $1$ and $n$ generators $\theta_i$, $i=1, \ldots, n$, satisfying  
\begin{equation}
    \theta_i \theta_j + \theta_j \theta_i = 0\,.
    \label{eq:defrelat}
\end{equation}
The elements of $\Gamma(n)$ that contain sums of products of only even (resp. odd) number of $\theta_i$'s are called \textit{even} (resp. \textit{odd}) and are denoted by $\Gamma(n)_0$ (resp. $\Gamma(n)_1$).  Even elements commute with all elements of $\Gamma(n)$, while the odd elements anticommute with each other. The Grassmann algebra $\Gamma(n)$, considered as a vector space, can be written as the direct sum of $\Gamma(n)_0$ and $\Gamma(n)_1$, namely $\Gamma(n)=\Gamma(n)_0\oplus\Gamma(n)_1$. Moreover, we have that $\Gamma(n)$ has a natural $\mathbb{Z}_2$-grading, i.e. $\Gamma(n)_i\Gamma(n)_j\subseteq\Gamma(n)_{(i+j)\mod 2}$ and therefore $\Gamma(n)_0$ is a subalgebra of $\Gamma(n)$. In what follows we denote elements of $\Gamma(n)_{0}$  by Latin letters, and elements of $\Gamma(n)_{1}$ by Greek letters, with the exception of $\lambda$ which plays the role of the spectral parameter and takes values in the field $\mathbb{F}$. A Grassmann algebra is an example of a superalgebra, i.e. a super vector space with a $\mathbb{Z}_2$-grading. For more details on superalgebras we direct the reader to \cite{berezin, dictionaryLie, rogers}.

We denote by $\mathbb{F}_n^{k,l}$ the $(k,l)$-dimensional superspace consisting of tuples of $k$ even and $l$ odd variables of a Grassmann algebra of order $n$ over $\mathbb{F}$, namely
\begin{equation}\label{Vn}
    \mathbb{F}_n^{k, l}:= \lbrace (\bm x,\bm \chi) \, | \, \bm x \in \Gamma(n)^{k}_0,~ \bm \chi \in  \Gamma(n)^{l}_1  \rbrace \,.
\end{equation}
In Sections \ref{sec: n1}, \ref{sec: n2} we present examples of maps over Grassmann algebras of order $n=1$ and $n=2$, respectively. Specifically, the $n=1$ case is the algebra of dual numbers (over $\mathbb{F}$), where an element of the algebra is of the form $a+b\theta$ with $\theta^2=0$ and $a,b \in \mathbb{F}$. In the case $n=2$, a generic element of the algebra can be written in the form $a+b\theta_1+c\theta_2+d\theta_1\theta_2$, with $\theta_1, \theta_2$ satisfying \eqref{eq:defrelat}, and with $a+d \theta_1\theta_2 \in \Gamma(2)_0$, $b\theta_1 + c\theta_2 \in \Gamma(2)_1$ and $a,b,c,d\in\mathbb{F}$.

We will be working with square matrices with elements in $\Gamma(n)$ (such matrices are examples of supermatrices), of the block-form 
$$M=\left(
\begin{matrix}
 P & \Pi \\
 \Lambda & L
\end{matrix}\right),$$
where $P$, $L$ are $p \times p$ and $q \times q$ matrices with elements in $\Gamma(n)_0$, while $\Pi$, $\Lambda$ are $p \times q$ and $q \times p$ matrices with elements in $\Gamma(n)_1$. We say that, for example, $P$ is an element of $\mbox{Mat}_p(\Gamma(n)_0)$ and $\Pi$ of $\mbox{Mat}_{p,q}(\Gamma(n)_1)$. We also assume that $\det(L)$ and $\det(P)$ are non-zero. We denote the set of $(p+q) \times (p+q)$ supermatrices, such as $M$, by $\mathrm{M}_{p,q}$. The  supertrace of $M \in \mathrm{M}_{p,q}$ is defined by
\begin{equation} \label{str}
\str(M)=\tr (P)-\tr (L)\,,
\end{equation}
and has the cyclic property $\str(MN)= \str(NM)$. The superdeterminant for $M \in \mathrm{M}_{p,q}$ is defined by:
\begin{equation} \label{sdet}
\sdet(M)=\det(P-\Pi L^{-1}\Lambda)\det(L)^{-1}=\det(P)\det(L-\Lambda P^{-1}\Pi)^{-1},
\end{equation}
and is multiplicative, meaning 
\begin{equation}
\sdet(M_1 M_2) = \sdet(M_1)\sdet(M_2)\,,
\label{eq:multi-det}
\end{equation}
for $M_1, M_2 \in \mathrm{M}_{p,q}$. It follows that the characteristic (rational) function
\begin{equation}
f_{M}(k)=\sdet(M-kI_{p,q})
\label{char fun}
\end{equation}
for a matrix $M \in \mathrm{M}_{p,q}$, with $I_{p,q}$ the unit supermatrix in $\mathrm{M}_{p,q}$, is invariant under similarity transformations $M \to UMU^{-1}$ for $U \in \mathrm{M}_{p,q}$, see \cite{kobayashi1990}. Indeed, 
\begin{equation} \label{inv sim}
\begin{array}{rcl}
f_{UMU^{-1}}(k) & = & \sdet(UMU^{-1}-kI_{p,q})\\
                      & = & \sdet(U(M-kI_{p,q})U^{-1}) = f_M(k)
\end{array}
\end{equation}
where the last equality follows from \eqref{eq:multi-det} and the fact that $\sdet(U^{-1})=\sdet(U)^{-1}$.

\subsection{Parametric entwining Yang-Baxter equation}

In this section we introduce a type of solutions of the entwining YB equation that depend on certain parameters in $\mathbb{F}$. For consistency we also assume that $A$ is $\mathbb{F}^d$ for a positive integer $d$. In many examples in the literature, the field $\mathbb{F}$ is $\mathbb{C}$ and the set $A$ is $\mathbb{C}^d$ or $\mathbb{C}\mathbb{P}^d$ and the resulting (entwining) YB maps are birational isomorphisms of $A\times A$. For more exotic examples see \cite{Doliwa2014, Kassotakis22}.

We consider the maps
$$
\mathcal{R}_{a,b}, \, \mathcal{S}_{a,b}, \, \mathcal{T}_{a,b}: \: A \times A \rightarrow A \times A \,,
$$ 
which depend on parameters $a, b \in \mathbb{F}$. Given for example map $\mathcal{R}_{a,b}$, we denote by $\mathcal{R}_{a,b}^{i,j}$ with $i \neq j \in \lbrace 1,2,3 \rbrace$ the extended map which acts as $\mathcal{R}_{a,b}$ on the $i$ and $j$ copies of the triple Cartesian product of $A$ with itself, and identically on the remaining copy of $A$.  More precisely, we have  
\begin{equation*}
\mathcal{R}_{a,b}^{12}=\mathcal{R}_{a, b}\times id\,, \quad \mathcal{R}_{a,b}^{23}=id\times \mathcal{R}_{a,b} \,,  \quad \mathcal{R}_{a,b}^{13}=\pi^{12} \circ \mathcal{R}_{a,b}^{23} \circ \pi^{12}\,,
\end{equation*}
with $\pi^{12}$ the extension of the permutation (flip) map $\pi:(x,y)\to (y,x)$ on $A \times A$, and $id$ the identity map on $A$. In this paper we will be concerned with parametric ordered triplets of maps $(\mathcal{S}_{a,b}, \mathcal{R}_{a,b}, \mathcal{T}_{a,b})$ which satisfy the parametric entwining YB equation
\begin{equation}\label{eYB_eq1}
\mathcal{S}^{12}_{a,b}\circ \mathcal{R}^{13}_{a,c} \circ \mathcal{T}^{23}_{b,c} = \mathcal{T}^{23}_{b,c}\circ \mathcal{R}^{13}_{a,c} \circ \mathcal{S}^{12}_{a,b}\, .
\end{equation}
We call such maps entwining YB maps. Equation \eqref{eYB_eq1} is to be understood as equality of compositions of maps over the triple product $A\times A\times A$.

The entwining YB equation \eqref{eYB_eq1} can be represented by the following diagram, where the lines are coloured to indicate that each crossing corresponds to a different map, e.g. the red-blue crossing corresponds to $\mathcal{S}^{12}_{a,b}$, etc. 
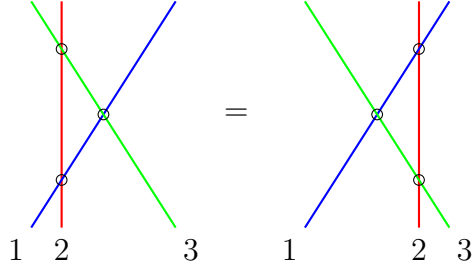
\begin{figure}[H]
\centering
\begin{tikzpicture}
\draw[red, thick] (0,0) -- (0,3);
\draw[blue, thick] (-0.4,0) -- (1.5,3);
\draw[green, thick] (-0.4,3)--(1.5,0);
\draw (0,12/19) circle (2pt);
\draw (0,45/19) circle (2pt);
\draw (11/20,3/2) circle (2pt);
\draw (3/2+0.8,3/2) node {=};
\draw[red, thick] (1.5+3.2,0) -- (3.2+1.5,3);
\draw[blue, thick] (3.2,0) -- (3.2+1.9,3);
\draw[green, thick](3.2,3)--(3.2+1.9,0);
\draw (4.15,1.5) circle (2pt);
\draw (4.15+11/20,12/19) circle (2pt);
\draw (4.15+11/20,45/19) circle (2pt);
\draw (-0.6,-0.3) node {1};
\draw (0,-0.3) node {2};
\draw (1.7,-0.3) node {3};
\draw (3,-0.3) node {1};
\draw (4.7,-0.3) node {2};
\draw (1.9+3.4,-0.3) node {3};
\end{tikzpicture}
\caption{Diagrammatic representation of entwining YB equation \eqref{eYB_eq1}}
\end{figure}
\noindent  When all lines have the same colour, i.e. when $\mathcal{S}_{a,b}=\mathcal{R}_{a,b}=\mathcal{T}_{a,b}$,  then equation \eqref{eYB_eq1} reduces to the parametric YB equation
\begin{equation} \label{YBE}
\mathcal{R}^{12}_{a,b}\circ \mathcal{R}^{13}_{a,c} \circ \mathcal{R}^{23}_{b,c} = \mathcal{R}^{23}_{b,c}\circ \mathcal{R}^{13}_{a,c} \circ \mathcal{R}^{12}_{a,b}\,.
\end{equation}
In general, a parametric YB map $R_{a,b}(x,y)=(u_{a,b}(x,y),v_{a,b}(x,y))$ is called non-degenerate if the maps $u_{a,b}(\cdot, y):A\to A$ and $v_{a,b}(x,\cdot):A\to A$ are bijective \cite{Etingof1999, Etingof2003}. More recently, non-degenerate YB maps which are also birational have been referred to as quadrirational YB maps \cite{ABS2004, PTSV2010}. We use the same terminology for maps that satisfy the entwining YB equation \eqref{eYB_eq1}.

Following \cite{KP2011}, we define a strong Lax triple for maps  $\mathcal{S}_{a,b}, \mathcal{R}_{a,b}, \mathcal{T}_{a,b}$ to be a set of three matrices $(\mathcal{L}_a, \mathcal{M}_a, \mathcal{N}_a)$, each depending on a point $x\in A$, a parameter $a\in\mathbb{F}$ and a spectral parameter $\lambda \in \mathbb{F}$,  such that the matrix refactorisation problems 
\begin{subequations}\label{Lax 3ple}
\begin{align}
\mathcal{L}_a(u) \mathcal{M}_b(v) &=\mathcal{M}_b(y) \mathcal{L}_a(x)\,, \label{S refact}\\
\mathcal{L}_a(u) \mathcal{N}_b(v) &=\mathcal{N}_b(y) \mathcal{L}_a(x)\,, \label{R refact}\\
\mathcal{M}_a(u) \mathcal{N}_b(v) &=\mathcal{N}_b(y) \mathcal{M}_a(x)\,, \label{T refact}
\end{align}
\end{subequations}
imply uniquely the maps $\mathcal{S}_{a,b}, \mathcal{R}_{a,b}, \mathcal{T}_{a,b}: (x,y) \to (u,v) $, respectively.  If equations \eqref{S refact}-\eqref{T refact} are satisfied for given $\mathcal{S}_{a,b}, \mathcal{R}_{a,b}, \mathcal{T}_{a,b}$ maps, then the triple of matrices is called simply a Lax triple. In general we omit the dependence of the Lax matrices on the spectral parameter $\lambda$ for convenience. It was proved in \cite{KP2011} that if $(\mathcal{L}_a,\mathcal{M}_a,\mathcal{N}_a)$ is a strong Lax triple for maps $\mathcal{S}_{a,b}, \mathcal{R}_{a,b}, \mathcal{T}_{a,b}$ and the following equality
\begin{equation}\label{trifact}
\mathcal{L}_a(x) \mathcal{M}_b  (y) \mathcal{N}_c (z) = \mathcal{L}_a(x') \mathcal{M}_b  (y') \mathcal{N}_c (z') 
\end{equation}
implies that $x = x', \, y = y' $ and $z = z' $ then the maps are entwining YB maps.  If $\mathcal{L}_{a}=\mathcal{M}_{a}=\mathcal{N}_{a}$, then the refactorisation problems \eqref{S refact}-\eqref{T refact} coincide and $\mathcal{L}_{a}$ is a strong Lax matrix for the parametric YB map $\mathcal{R}_{a,b}$.

The trace of products of the Lax matrices in \eqref{Lax 3ple} gives invariants of the maps $\mathcal{S}_{a,b}, \mathcal{R}_{a,b}, \mathcal{T}_{a,b}$. By invariant of a map, say $\mathcal{R}_{a,b}$, we mean a function $I$ such that $I \circ \mathcal{R}_{a,b} = I$. Moreover, an anti-invariant $I$ is a function such that $I \circ \mathcal{R}_{a,b} = - I$. It follows that the product of two different anti-invariants or the square of anti-invariants are all invariants of the given map.

In Section \ref{sec: YB maps} we  derive maps  $\mathcal{S}_{a,b}, \mathcal{R}_{a,b}, \mathcal{T}_{a,b}$ satisfying the parametric entwining YB equation \eqref{eYB_eq1} over the Grassmann algebra $\Gamma(n)$.   In this case, the set $A$ is the $(k,l)$-dimensional superspace $\mathbb{F}^{k,l}_n$ for given positive integers $k,l$, and the obtained entwining YB maps are birational automorphisms of $\mathbb{F}^{k,l}_n\times \mathbb{F}^{k,l}_n$. Similarly, the Lax triple $(\mathcal{L}_a, \mathcal{M}_a, \mathcal{N}_a)$ will depend on a point $(\bm{x},\bm{\chi})\in \mathbb{F}^{k,l}_n$. The maps we construct are Grassmann extensions of the following entwining YB maps
\begin{equation}\label{YB KRP}
\begin{aligned}
{\displaystyle \mathtt{S}_{a,b}(x_1,x_2,y_1,y_2)} &= {\displaystyle \left( \frac{y_1^2}{bx_1} + \frac{y_1}{b}(y_2-a), \frac{b}{y_1}, x_1, a+x_1x_2 -\frac{y_1}{x_1}    \right)}\,, \\
{\displaystyle \mathtt{R}_{a,b}(x_1,x_2,y_1,y_2)} &= {\displaystyle \left( y_1 - \frac{a-b}{1+x_1y_2}x_1, y_2,x_1,x_2 + \frac{a-b}{1+x_1y_2}y_2  \right)} \,, \\
{\displaystyle \mathtt{T}_{a,b}(x_1,x_2,y_1,y_2)} &= {\displaystyle \left( \frac{a}{y_2}, b+ y_1y_2 - \frac{a}{x_1y_2},x_1, \frac{a}{x_1^2y_2} +\frac{x_2-b}{x_1}  \right)} \,.
\end{aligned}
\end{equation}
Maps \eqref{YB KRP} were derived in \cite{KRP2019}, and were shown to admit a strong Lax triple and to be Liouville integrable having polynomial and rational invariants which Poisson-commute.

\section{Grassmann entwining YB maps} \label{sec: YB maps}

In this section we derive birational, parametric, entwining YB maps over the Grassmann algebra $\Gamma(n)$, starting from the refactorisation problems of certain Lax supermatrices. These Lax matrices are Darboux matrices associated to an NLS equation. Darboux matrices for NLS type equations were derived in \cite{KRMX2015}, while in \cite{grahovski2013} a Grassmann generalisation of the NLS and associated Darboux transformations were presented.  Refactorisation problems of certain Darboux matrices over Grassmann algebras were considered in \cite{grahovski2016}, resulting to YB maps with Grassmann variables. Moreover, in \cite{KRP2019} the parametric entwining YB maps given in \eqref{YB KRP} were derived from the refactorisation problems of the Darboux matrices which were presented in \cite{KRMX2015}. The resulting entwining YB maps of this section are generalisations of \eqref{YB KRP} involving non-commutative (Grassmann) variables. We note here that while the maps that we obtain in this paper are birational, they are degenerate or non-quadrirational.

We consider the following supermatices in $\mathrm{M}_{2,1}$
\begin{equation} \label{Lax}
\mathcal{L}_a({\bm x}, {\bm \chi}) = \begin{pmatrix}
x_1x_2+ \chi_1 \chi_2 + a + \lambda & x_1 & \chi_1\\
x_2 & 1& 0\\
\chi_2 & 0 & 1
\end{pmatrix} \!, \;
\mathcal{M}_a({\bm x}, {\bm \chi}) = \begin{pmatrix}
x_2 + \lambda & x_1 & \chi_1\\
\frac{a}{x_1} & 0& 0\\
\chi_2 & 0 & 1
\end{pmatrix} \!,
\end{equation}
with $({\bm x}, {\bm \chi}) =(x_1, x_2, \chi_1, \chi_2)  \in \mathbb{F}_n^{2,2}$, $a \in \mathbb{F}$ a parameter, and $\lambda \in \mathbb{F}$ a spectral parameter. The $2 \times 2$ blocks of $\mathcal{L}_a$ and $\mathcal{M}_a$ with entries in $\Gamma(n)_0$  constitute the Darboux matrices for NLS derived in \cite{KRMX2015}. The refactorisation problems \eqref{S refact}-\eqref{T refact} for matrices \eqref{Lax}, with $\mathcal{N}_a \equiv \mathcal{L}_a$, have unique solutions for $\bigl( ({\bm u}, {\bm \xi}), ({\bm v}, {\bm \eta}) \bigr)$ in terms of $\bigl( ({\bm x}, {\bm \chi}), ({\bm y}, {\bm \psi}) \bigr)$. These give rise to eight-dimensional birational maps $\mathcal{R}_{a,b}, \, \mathcal{S}_{a,b}, \,\mathcal{T}_{a,b}$ with even-odd Grassmann variables which act as 
$$
 \left( (x_1, x_2, \chi_1, \chi_2), (y_1, y_2, \psi_1, \psi_2)  \right) \mapsto \left( (u_{1}, u_{2}, \xi_{1}, \xi_{2}), (v_{1}, v_{2}, \eta_{1}, \eta_{2})  \right)\,.
$$
In particular, map $\mathcal{R}_{a,b}$ is defined by the following expressions 
\begin{equation} \label{R comps}
\mathcal{R}_{a,b}:
\begin{cases}
 u_1 = y_1 - \frac{(a-b)(1+x_1y_2-\chi_1 \psi_2)}{(1+x_1y_2)^2} x_1 \,,     &    v_1 = x_1\,,  \\
  u_2 =  y_2 \,,      &    v_2  =  x_2 +  \frac{(a-b)(1+x_1y_2- \chi_1 \psi_2)}{(1+x_1y_2)^2} y_2 \,,  \\
 \xi_1 = \psi_1 -  \frac{a-b }{1+ x_1y_2} \chi_1 \,,  &    \eta_1 = \chi_1 \,, \\
 \xi_2 = \psi_2 \,,  &    \eta_2 =  \chi_2 +  \frac{a-b}{1+x_1y_2} \psi_2 \,, 
\end{cases}
\end{equation}
map $\mathcal{S}_{a,b}$ is given by 
\begin{equation} \label{S comps}
\! \mathcal{S}_{a,b}:
\begin{cases}
 u_1 = \frac{y_1^2 (1+ \chi_1 \psi_2)}{b x_1}  +  \frac{y_1 (y_2-a-\psi_1 \psi_2)}{b}\,,     &   v_1 = x_1\,, \vspace{0.2cm} \\
 u_2 =  \frac{b}{y_1} \,,      &   v_2 =  a+x_1 x_2 - \frac{y_1}{x_1} + \chi_1 \chi_2 \,, \\
 \xi_1 = \psi_1 - \frac{y_1}{x_1} \chi_1\,,  &   \eta_1 = \chi_1 \,, \\
 \xi_2 = \psi_2 \,,  &   \eta_2 = \chi_2 +\frac{y_1}{x_1} \psi_2 \,, 
\end{cases}
\end{equation}
and finally map $\mathcal{T}_{a,b}$ is
\begin{equation} \label{T comps}
\mathcal{T}_{a,b}:
\begin{cases}
u_1 = \frac{a}{y_2} \,,     &  v_1 = x_1\,,  \\
 u_2 = b - \frac{a}{x_1 y_2} + y_1 y_2 + \psi_1 \psi_2 \,,   &  v_2 = \frac{a (1+\chi_1 \psi_2)}{x_1^2 y_2}  + \frac{(x_2-b-\chi_1\chi_2)}{x_1} \,,  \vspace{0.15cm} \\
 \xi_1 = \psi_1 + \frac{a \chi_1}{x_1y_2}\,,    &  \eta_1 = \chi_1 \,,  \\
\xi_2 =  \psi_2 \,,     &   \eta_2 = \chi_2 - \frac{a \psi_2}{x_1 y_2}\,.
\end{cases}
\end{equation}

\begin{theorem} \label{RST maps}
The maps $\mathcal{S}_{a,b}, \, \mathcal{R}_{a,b}, \, \mathcal{T}_{a,b}: \mathbb{F}_n^{2, 2} \times \mathbb{F}_n^{2, 2}  \rightarrow \mathbb{F}_n^{2, 2} \times \mathbb{F}_n^{2, 2}$ defined in \eqref{R comps}-\eqref{T comps} admit a strong Lax triple  $(\mathcal{L}_a, \mathcal{M}_a, \mathcal{N}_a)$, with $\mathcal{L}_a, \mathcal{M}_a$ given in \eqref{Lax} and $\mathcal{N}_a \equiv \mathcal{L}_a$, and they satisfy the parametric entwining Yang--Baxter equation \eqref{eYB_eq1}.
\end{theorem}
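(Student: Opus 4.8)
The statement has two parts: (i) that the refactorisation problems \eqref{S refact}--\eqref{T refact} with $\mathcal{N}_a\equiv\mathcal{L}_a$ \emph{uniquely} reproduce the maps \eqref{R comps}--\eqref{T comps}, so that $(\mathcal{L}_a,\mathcal{M}_a,\mathcal{N}_a)$ is a strong Lax triple, and (ii) that the triplet satisfies \eqref{eYB_eq1}. For (ii) I would not compose the eight-dimensional maps directly; instead I would invoke the criterion recalled after \eqref{trifact}, due to \cite{KP2011}: once (i) holds, it suffices to check that the trifactorisation \eqref{trifact}, here
\[
\mathcal{L}_a(x)\,\mathcal{M}_b(y)\,\mathcal{L}_c(z)=\mathcal{L}_a(x')\,\mathcal{M}_b(y')\,\mathcal{L}_c(z'),
\]
forces $x=x'$, $y=y'$, $z=z'$. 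Thus the whole theorem reduces to two uniqueness computations with the explicit supermatrices \eqref{Lax}.

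For part (i), the key structural observation is that each of $\mathcal{L}_a$ and $\mathcal{M}_a$ is affine in the spectral parameter $\lambda$, with $\lambda$ entering only the $(1,1)$ entry; writing $\mathcal{L}_a=\mathcal{L}_a^{(0)}+\lambda E_{11}$ and similarly for $\mathcal{M}_a$ (where $E_{11}$ is the matrix unit), each refactorisation becomes a matrix polynomial identity in $\lambda$ of degree two. Since $E_{11}^2=E_{11}$, the $\lambda^2$ coefficient is $E_{11}$ on both sides and is automatic, so matching the $\lambda^1$ and $\lambda^0$ coefficients yields a finite system for the outputs $(\bm u,\bm\xi)$, $(\bm v,\bm\eta)$ in terms of the inputs. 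The $\mathbb{Z}_2$-grading guarantees the block structure of $\mathrm{M}_{2,1}$ is preserved, so even and odd entries decouple into separate scalar equations. I would first read off the (mostly linear) $\lambda^1$ relations, which immediately fix the pass-through variables $v_1=x_1$, $\eta_1=\chi_1$, $\xi_2=\psi_2$ and some further components, and then substitute into the $\lambda^0$ relations; dividing by the even quantities whose bodies are assumed nonzero (such as $x_1$, $1+x_1y_2$, $y_1$) yields exactly \eqref{R comps}--\eqref{T comps}. Uniqueness follows because each solving step has the form (invertible even element)$\times$(unknown)$=$(known), solvable order-by-order in the generators $\theta_i$. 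Existence is the substitution check that \eqref{R comps}--\eqref{T comps} satisfy the matrix identities, which I would verify directly; as a consistency test, taking $\sdet$ of each refactorisation and using multiplicativity \eqref{eq:multi-det} must reduce to a tautology.

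For part (ii), I would expand the triple product $\mathcal{L}_a(x)\mathcal{M}_b(y)\mathcal{L}_c(z)$ as a degree-three polynomial in $\lambda$; its $\lambda^3$ coefficient is again $E_{11}$ and carries no information, while the lower-order coefficients, read entry by entry and split by parity, supply the expressions from which $x$, $y$, $z$ are to be recovered. The plan is to extract the even data first from the even block entries at suitable powers of $\lambda$---isolating $x_1$, then $z$, then $y$---and afterwards the odd data from the off-diagonal odd entries, again proceeding order-by-order in the $\theta_i$. Establishing that this recovery is unique is precisely the hypothesis needed to apply the \cite{KP2011} criterion and conclude \eqref{eYB_eq1}.

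The main obstacle I anticipate is this trifactorisation uniqueness in the noncommutative setting: unlike the commutative reduction \eqref{YB KRP} treated in \cite{KRP2019}, one cannot simply divide or invoke generic-rank arguments, and must instead track the nilpotent odd contributions carefully, confirming that the system separating $x,y,z$ stays triangular order-by-order in the generators and that no odd ambiguities survive. The refactorisation solving in part (i) is comparatively routine once the $\lambda$-grading is exploited, but the bookkeeping of anticommuting variables throughout both parts is where errors are most likely, so I would cross-check the resulting maps against the commutative limit \eqref{YB KRP} and against the superdeterminant invariants of the monodromy matrix.
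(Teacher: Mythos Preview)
Your proposal is correct and follows essentially the same route as the paper: both parts rely on the \cite{KP2011} trifactorisation criterion, exploiting the affine dependence on $\lambda$ through $E_{11}$ to expand $\mathcal{L}_a\mathcal{M}_b\mathcal{L}_c$ as a cubic in $\lambda$ and match coefficients entry by entry. Your anticipated obstacle about order-by-order tracking of the $\theta_i$ is somewhat overstated, however: the paper's appendix shows the argument proceeds cleanly at the level of $\Gamma(n)$-valued supermatrix entries (inverting even elements and solving a small $2\times2$ linear system over $\Gamma(n)$), without ever descending to the individual generators.
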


\begin{proof}
The first part of the proof can be shown directly by solving the following refactorisation problems 
\begin{subequations}\label{Lax 3ple - Gr}
\begin{align}
\mathcal{L}_a({\bm u},{\bm \xi}) \mathcal{M}_b({\bm v},{\bm \eta}) &=\mathcal{M}_b({\bm y},\bm{\psi}) \mathcal{L}_a({\bm x},{\bm \chi})\,, \label{S refact - Gr}\\
\mathcal{L}_a({\bm u},{\bm \xi}) \mathcal{L}_b({\bm v},{\bm \eta}) &=\mathcal{L}_b({\bm y},\bm{\psi}) \mathcal{L}_a({\bm x},{\bm \chi})\,, \label{R refact - Gr}\\
\mathcal{M}_a({\bm u},{\bm \xi}) \mathcal{L}_b({\bm v},{\bm \eta}) &=\mathcal{L}_b({\bm y},\bm{\psi}) \mathcal{M}_a({\bm x},{\bm \chi})\,, \label{T refact - Gr}
\end{align}
\end{subequations}
and showing that they admit a unique solution for $( {\bm u}, {\bm \xi}, {\bm v}, {\bm \eta} )$  in terms of  $( {\bm x}, {\bm \chi}, {\bm y}, {\bm \psi} )$. The proof that the obtained maps satisfy the entwining YB equation is given in Appendix \ref{proof EYB}.
\end{proof}

\begin{remark} \label{ST equiv}
Maps $\mathcal{S}_{a,b}$ and $\mathcal{T}_{a,b}$ are related to each other by 
$$
\mathcal{T}_{a,b}=\tilde{\mathcal{S}}_{a,b}:=\pi \circ \mathcal{S}_{b,a}^{-1} \circ \pi,
$$
where now $\pi$ is the flip map in $\mathbb{F}^{2,2}_n\times \mathbb{F}^{2,2}_n $, acting as $\pi((\bm x, \bm \chi), (\bm y, \bm \psi)) = ((\bm y, \bm \psi), (\bm x, \bm \chi))$.
This can be readily deduced by observing that the refactorisation problems \eqref{S refact - Gr} and \eqref{T refact - Gr} are related by the transformation $a \leftrightarrow b$, $(\bm u, \bm \xi) \leftrightarrow (\bm y, \bm \psi)$, $(\bm x, \bm \chi) \leftrightarrow (\bm v, \bm \eta)$. This shows the birationality of maps $\mathcal{S}_{a,b}$ and $\mathcal{T}_{a,b}$. Moreover, the invariance of \eqref{R refact - Gr} under the above transformation shows that $\mathcal{R}_{a,b}$ is also a birational map. In what follows we will only consider the maps $\mathcal{R}_{a,b}$ and $\mathcal{S}_{a,b}$.
\end{remark}

Map $\mathcal{R}_{a,b}$ is an extension over Grassmann algebras of the Adler-Yamilov map \cite{adleryamilov1994}. This map and its corresponding matrix refactorisation problem \eqref{R refact - Gr}  were studied in \cite{grahovski2016}, where it was shown  that $\mathcal{R}_{a,b}$ is a birational Yang-Baxter map, and also reversible i.e. it satisfies the relation $\tilde{\mathcal{R}}_{a,b} \circ \mathcal{R}_{a,b} = id$. Taking the commutative limit in maps $\mathcal{S}_{a,b}$ and $\mathcal{T}_{a,b}$, i.e. sending all the odd variables to zero, we obtain the birational maps which were derived in \cite{KRP2019}. Therefore, the maps $\mathcal{S}_{a,b}, \mathcal{R}_{a,b}, \mathcal{T}_{a,b}$ in Theorem \ref{RST maps} form a generalisation over Grassmann algebras of the entwining Yang-Baxter maps given in \eqref{YB KRP}. We show that invariants of these maps can be obtained using the invariance of the superdeterminant under similarity transformations, see \eqref{inv sim} in Section \ref{sec: Prelim}.

\begin{theorem}
The maps $\mathcal{R}_{a,b}, \mathcal{S}_{a,b}:\mathbb{F}^{2,2}_n\times \mathbb{F}^{2,2}_n \to \mathbb{F}^{2,2}_n\times \mathbb{F}^{2,2}_n $, given in \eqref{R comps}-\eqref{S comps}, admit the following $\mathcal{I,J}$-sets of invariants, respectively:
\begin{equation} \label{R inv}
\begin{aligned}
\mathcal{I}_1 &= x_1x_2 + y_1 y_2 \,, \quad \mathcal{I}_2 = \chi_1\chi_2 + \psi_1\psi_2 \,, \quad \mathcal{I}_3 = (x_1 \psi_1 - y_1\chi_1)(x_2\psi_2 - y_2 \chi_2) \,, \\
\mathcal{I}_4 &= b(x_1x_2+\chi_1\chi_2) + a(y_1y_2+\psi_1\psi_2) +  y_1y_2 (x_1x_2+\chi_1\chi_2)  + x_1x_2\psi_1\psi_2  \\ & + x_1y_2 + x_2y_1 + \chi_1 \psi_2 + \psi_1 \chi_2\,, \quad \mathcal{I}_5 =   \chi_1\chi_2\psi_1\psi_2\,,
  \end{aligned}
\end{equation}
\begin{equation} \label{S inv}
\begin{aligned}
\mathcal{J}_1 &= y_2 + x_1x_2 + \chi_1\chi_2 \,, \quad \mathcal{J}_2 = \chi_1\chi_2 + \psi_1\psi_2 \,, \\
\mathcal{J}_3 &= (a+x_1x_2)\psi_1 \psi_2 + (b+y_2) \chi_1 \chi_2 + (1-x_2y_1)\chi_1 \psi_2 + \left( 1-\frac{b x_1}{y_1} \right)\psi_1  \chi_2 \,, \\
\mathcal{J}_4 &= b \frac{x_1}{y_1}  + y_2(a+ x_1x_2 + \chi_1\chi_2) +  x_2y_1 + \chi_1\psi_2 +\psi_1 \chi_2\,, \quad \mathcal{J}_5 = \chi_1\chi_2\psi_1\psi_2\,.
\end{aligned}
\end{equation} 
\end{theorem}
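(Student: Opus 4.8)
```latex
The plan is to exploit the invariance of the characteristic rational function $f_M(k)=\sdet(M-kI_{2,1})$ under similarity transformations, established in \eqref{inv sim}, applied to the monodromy supermatrix built from the Lax triple. For the map $\mathcal{R}_{a,b}$, the refactorisation problem \eqref{R refact - Gr} reads $\mathcal{L}_a({\bm u},{\bm \xi})\mathcal{L}_b({\bm v},{\bm \eta})=\mathcal{L}_b({\bm y},\bm{\psi})\mathcal{L}_a({\bm x},{\bm \chi})$, which says precisely that the monodromy matrix $T=\mathcal{L}_b({\bm y},\bm{\psi})\mathcal{L}_a({\bm x},{\bm \chi})$ is carried by the map to $\mathcal{L}_a({\bm u},{\bm \xi})\mathcal{L}_b({\bm v},{\bm \eta})$; since both expressions equal the same product, $T$ is genuinely invariant as a matrix-valued function, and hence so is any quantity extracted from it that is a similarity invariant. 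Therefore I would first form the product $T=\mathcal{L}_b({\bm y},\bm{\psi})\,\mathcal{L}_a({\bm x},{\bm \chi})$ using the explicit supermatrices in \eqref{Lax}, and then compute its characteristic function $f_T(k)=\sdet(T-kI_{2,1})$ via the formula \eqref{sdet}. The coefficients of this rational function in $k$ (equivalently, in the spectral parameter $\lambda$, on which $T$ depends polynomially) are automatically invariants of $\mathcal{R}_{a,b}$.

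Concretely, I would expand $f_T(k)$ (or $\str(T)$, $\str(T^2)$, and $\sdet(T)$, which are more directly computable similarity invariants) as polynomials in $\lambda$, and read off the $\lambda$-dependent coefficients. The supertrace $\str(T)=\tr(P)-\tr(L)$ and superdeterminant immediately furnish several of the listed invariants: I expect $\mathcal{I}_1=x_1x_2+y_1y_2$ and $\mathcal{I}_2=\chi_1\chi_2+\psi_1\psi_2$ to emerge from low-order coefficients of the supertrace, while the more intricate combinations $\mathcal{I}_4$ and $\mathcal{I}_3,\mathcal{I}_5$ should surface from the superdeterminant expansion and from products/squares of anti-invariants as flagged in the text (recall that products of two anti-invariants are invariants). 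For the map $\mathcal{S}_{a,b}$, the relevant monodromy matrix is the mixed product $\mathcal{M}_b({\bm y},\bm{\psi})\,\mathcal{L}_a({\bm x},{\bm \chi})$ dictated by \eqref{S refact - Gr}, and the same extraction of spectral-parameter coefficients of its characteristic function should yield the $\mathcal{J}$-set.

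After generating candidate invariants this way, the second step is to match them to the precise expressions in \eqref{R inv}-\eqref{S inv}. This is largely bookkeeping: the coefficients I obtain may differ from the stated invariants by field-valued factors or by additive combinations of lower invariants and the parameters $a,b$, so I would reduce the raw coefficients modulo the algebra generated by simpler invariants to isolate the stated normal forms. As a failsafe and independent check, I would verify directly that $\mathcal{I}_k\circ\mathcal{R}_{a,b}=\mathcal{I}_k$ and $\mathcal{J}_k\circ\mathcal{S}_{a,b}=\mathcal{J}_k$ by substituting the component formulas \eqref{R comps}, \eqref{S comps} into each candidate and simplifying, using the Grassmann relations \eqref{eq:defrelat} (in particular $\chi_i^2=0$ and the anticommutativity that makes all degree-five-and-higher odd monomials vanish once $n$ is fixed).

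The main obstacle I anticipate is the noncommutative Grassmann bookkeeping in the expansion: because the entries mix even and odd elements, ordinary determinant intuition fails and one must use \eqref{sdet} carefully, tracking signs from anticommutation and remembering that odd$\times$odd terms like $\chi_1\psi_2$ are even but change sign under reordering. The appearance of the apparently asymmetric term $\left(1-\tfrac{bx_1}{y_1}\right)\psi_1\chi_2$ in $\mathcal{J}_3$, with its rational dependence on $y_1$, signals that the $\mathcal{S}$-computation will require the full rational superdeterminant rather than just a polynomial trace, and that some invariants arise only after combining several coefficients. Verifying that the quartic invariant $\mathcal{I}_5=\mathcal{J}_5=\chi_1\chi_2\psi_1\psi_2$ is preserved is a useful sanity check, since it is the top odd monomial and its invariance follows almost immediately from the structure of the maps, whereas $\mathcal{I}_3$ and $\mathcal{J}_3$ will demand the most care.
```
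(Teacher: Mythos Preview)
Your overall strategy---form the monodromy supermatrix, compute its characteristic function via \eqref{sdet}, and read off the $\lambda$- and $k$-coefficients as invariants---is exactly what the paper does, including the explicit expansion of $\sdet(P-kI_{2,1})$ and the identification of the spectral-curve moduli with the $\mathcal{I}$- and $\mathcal{J}$-sets.

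There is, however, a genuine slip in your justification. You write that the refactorisation $\mathcal{L}_a({\bm u},{\bm \xi})\mathcal{L}_b({\bm v},{\bm \eta})=\mathcal{L}_b({\bm y},\bm{\psi})\mathcal{L}_a({\bm x},{\bm \chi})$ shows that the monodromy $T=\mathcal{L}_b({\bm y},\bm{\psi})\mathcal{L}_a({\bm x},{\bm \chi})$ is ``genuinely invariant as a matrix-valued function''. It is not. Under $\mathcal{R}_{a,b}$ the point $({\bm x},{\bm \chi},{\bm y},{\bm \psi})$ is sent to $({\bm u},{\bm \xi},{\bm v},{\bm \eta})$, so the image of $T$ is $\mathcal{L}_b({\bm v},{\bm \eta})\mathcal{L}_a({\bm u},{\bm \xi})$, \emph{not} $\mathcal{L}_a({\bm u},{\bm \xi})\mathcal{L}_b({\bm v},{\bm \eta})$. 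The refactorisation then gives
\[
T\circ\mathcal{R}_{a,b}=\mathcal{L}_b({\bm v},{\bm \eta})\,\mathcal{L}_a({\bm u},{\bm \xi})
=\mathcal{L}_b({\bm v},{\bm \eta})\,T\,\mathcal{L}_b({\bm v},{\bm \eta})^{-1},
\]
so $T$ changes by a similarity transformation, not by the identity. This is precisely the isospectrality statement the paper records (for $\mathcal{S}_{a,b}$ with $\mathcal{M}_b$ in place of $\mathcal{L}_b$), and it is the reason one must pass to similarity invariants such as $f_T(k)$ rather than arbitrary entrywise functions of $T$. Your own phrasing is internally inconsistent on this point: if $T$ were truly matrix-invariant, every scalar function of its entries would be conserved, and there would be no need to single out similarity invariants. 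Once this is corrected, the rest of your plan goes through and coincides with the paper's proof.
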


\begin{proof}
The invariants of map $\mathcal{S}_{a,b}$ are obtained using the monodromy supermatrix  $P_{\mathcal{S}}(\bm{x}, \bm{\chi}, \bm{y}, \bm{\psi}) = \mathcal{M}_b(\bm{y}, \bm{\psi}) \mathcal{L}_a(\bm{x}, \bm{\chi})$, with $\mathcal{L}_a, \mathcal{M}_b$ given in \eqref{Lax}. From the refactorisation property \eqref{S refact - Gr} we obtain the isospectrality property of the monodromy under the action of the map
\begin{equation}
P_{\mathcal{S}}({\bm u}, {\bm \xi}, {\bm v}, {\bm \eta}) = \mathcal{M}_b({\bm v}, {\bm \eta}) P_{\mathcal{S}}({\bm x}, {\bm \chi}, {\bm y}, {\bm \psi}) \mathcal{M}_b^{-1}({\bm v}, {\bm \eta}) \,,
\end{equation}
similarly to the commutative setting. 
It follows that the characteristic (rational) function of the monodromy supermatrix $f_{P_{\mathcal{S}}}(k)=\sdet (P_{\mathcal{S}}-k I_{2,1})$ generates invariants of the map $\mathcal{S}_{a,b}$.

The supermatrix $P_{\mathcal{S}} - k I_{2,1}$ can be written in the form 
\begin{equation}
P_{\mathcal{S}}-k I_{2,1} = \begin{pmatrix}
{\rm A}-kI & {\rm B} \\
{\rm C} & {\rm D}(k)
\end{pmatrix},
\end{equation}
with ${\rm D}(k) = \psi_2\chi_1 + 1-k$ and ${\rm A}, {\rm B}, {\rm C}$ functions of ${\bm x}, {\bm \chi}, {\bm y}, {\bm \psi}$ and $\lambda$. 
Hence, from \eqref{sdet} we have that
$$
f_{P_{\mathcal{S}}}(k,\lambda) =  \frac{\det({\rm A}-kI-{\rm B D}(k)^{-1}{\rm C})}{\det {\rm D}(k)} = \frac{1-k-\psi_2\chi_1}{(1-k)^2} \det({\rm A}- kI - {\rm B D}(k)^{-1}{\rm C})\,.
$$
Setting $\mu=1-k$, and factoring a $\mu^{-2}$ factor outside the determinant, the characteristic function takes the form
\begin{equation}\label{p(s)}
f_{P_{\mathcal{S}}}(\mu, \lambda) = \frac{\mu+\chi_1\psi_2}{\mu^6} \det(\mu^3 I + \mu^2({\rm A}-I) -\mu {\rm B C} - \chi_1\psi_2 {\rm B C}) \,.
\end{equation}
The equation $\mu^3 f_{P_{\mathcal{S}}}(\mu, \lambda)=0$  defines the spectral curve associated to map $\mathcal{S}_{a,b}$ and its moduli provides the $\mathcal{J}$-set of invariants of the map. Indeed, expanding the determinant and using the explicit forms of $A, {\rm B, C}$, we obtain
\begin{equation} \label{spectral poly}
\mu^3f_{P_S}(\mu, \lambda) = \mu^4 + \mu^3f_3(\lambda) + \mu^2f_2(\lambda) +\mu f_1(\lambda) +f_0(\lambda) \,,
\end{equation}
where the $f_i(\lambda)$, with $i=0,\ldots, 3$, are generating functions of the invariants of  map $\mathcal{S}_{a,b}$ and have the following form 
\begin{equation}
\begin{aligned}
f_3(\lambda) &=\lambda^2 + \lambda(\mathcal{J}_1+a) + \mathcal{J}_4 \,,\\
f_2(\lambda) &= -\lambda (\mathcal{J}_1 + \mathcal{J}_2 +a+b) - (2\mathcal{J}_5 + \mathcal{J}_4 + \mathcal{J}_3 +ab) \,,\\
f_1(\lambda) &= \mathcal{J}_3 \,,\\
f_0(\lambda) &= -2 \mathcal{J}_5 \,.
\end{aligned}
\end{equation}

Similarly, we define the monodromy matrix of $\mathcal{R}_{a,b}$ to be $P_{\mathcal{R}}(\bm{x}, \bm{\chi}, \bm{y}, \bm{\psi}) = \mathcal{L}_b(\bm{y}, \bm{\psi}) \mathcal{L}_a(\bm{x}, \bm{\chi})$ and then it follows that the characteristic function $g_{P_{\mathcal{R}}}(\mu, \lambda)$ associated with the map $\mathcal{R}_{a,b}$ can be written in the form
$$
\mu^3 g_{P_{\mathcal{R}}}(\mu, \lambda) = \mu^4 + \mu^3 g_3(\lambda) + \mu^2 g_2(\lambda) + \mu g_1(\lambda) \,,
$$
with 
\begin{equation}
\begin{aligned}
g_3(\lambda) &=\lambda^2 + \lambda( \mathcal{I}_1 + \mathcal{I}_2 +a+b) + \mathcal{I}_5 + \mathcal{I}_4 + ab-1 \,,\\
g_2(\lambda) &= - \lambda (\mathcal{I}_1 + \mathcal{I}_2) - (3\mathcal{I}_5 + \mathcal{I}_4 + \mathcal{I}_3) \,,\\
g_1(\lambda) &= 2\mathcal{I}_5 + \mathcal{I}_3 \, ,
\end{aligned}
\end{equation}
thus obtaining the $\mathcal{I}-$set of invariants of $\mathcal{R}_{a,b}$.
\end{proof}

\begin{remark}
The invariants $\mathcal{I}_1, \mathcal{I}_2, \mathcal{I}_4, \mathcal{I}_5$ of map $\mathcal{R}_{a,b}$ were derived in \cite{grahovski2016} using the supertrace of the monodromy. We notice that $\mathcal{I}_2$ and $\mathcal{I}_5$ are related by $2\mathcal{I}_5 = \mathcal{I}_2^2$. Here we obtain the new invariant $\mathcal{I}_3$ of the map using the characteristic function. 
\end{remark}

\begin{remark} \label{rem: anti-inv}
One can verify that the quantities $\chi_1 \psi_1$ and $\chi_2 \psi_2$ are anti-invariants of all maps $\mathcal{R}_{a,b}, \mathcal{S}_{a,b}, \mathcal{T}_{a,b}$, and that the invariant $\mathcal{I}_5$ (and $\mathcal{J}_5$) can be obtained from the product of those anti-invariants. Moreover, for map $\mathcal{R}_{a,b}$ the quantities $x_i \psi_i - y_i \chi_i$ for $i=1, 2$ are anti-invariants, and $\mathcal{I}_3$ is the product of these two anti-invariants.
\end{remark}

\begin{remark}
Using Remark \ref{ST equiv} we deduce that the invariants of map $\mathcal{T}_{a,b}$ can be obtained from the $\mathcal{J}$-set using the reflection  $a \leftrightarrow b$, ${\bm x} \leftrightarrow {\bm y}$, ${\bm \chi} \leftrightarrow {\bm \psi}$.
\end{remark}

In the following two sections we derive entwining YB maps in dimensions $8$ and $16$ with commutative variables. To achieve this, we start from the derived maps $\mathcal{R}_{a,b}, \mathcal{S}_{a,b}, \mathcal{T}_{a,b}$ and consider the Grassmann algebra $\Gamma(n)$ when $n=1$ and $n=2$. This way, we demonstrate how the first two members of a hierarchy of $2^{n+2}-$dimensional birational entwining YB maps can be obtained for any $n=1, 2, \ldots$.

\section{The Grassmann algebra $\Gamma(1)$} \label{sec: n1}

The algebra $\Gamma(1)$ has unit $1$ and one generator $\theta$ with $\theta ^2 = 0$. This is the case of dual numbers (over $\mathbb{F}$). We consider the maps $\mathcal{R}_{a,b}, \mathcal{S}_{a,b}, \mathcal{T}_{a,b}$ in \eqref{R comps}-\eqref{T comps}  over the Grassmann algebra $\Gamma(1)$, and in this way we derive birational maps $S_{a,b}, \, R_{a,b}, \,T_{a,b}$ with commutative variables which satisfy the entwining YB equation \eqref{eYB_eq1}. Moreover, we study  integrability properties of the obtained maps, such as Lax representation, invariants, and measure preservation.

We expand all variables of the maps $\mathcal{R}_{a,b}, \mathcal{S}_{a,b}, \mathcal{T}_{a,b}$  in \eqref{R comps}-\eqref{T comps} and their images in terms of  $1$ and $\theta$. The Latin variables are in $\Gamma(1)_0$ and are therefore proportional to $1$, while the Greek variables are in $\Gamma(1)_1$ and thus proportional to $\theta$. Comparing coefficients of $1$ and $\theta$ on both sides of equations \eqref{R comps}-\eqref{T comps} we obtain  maps $R_{a,b}, \, S_{a,b}, \,T_{a,b}: \mathbb{F}^8 \rightarrow \mathbb{F}^8$ with 
$$
(x_1,x_2,\chi_1,\chi_2,y_1,y_2,\psi_1,\psi_2)\mapsto (u_1,u_2,\xi_1,\xi_2,v_1,v_2,\eta_1,\eta_2)\,.
$$ 
These maps are given by the following expressions
\begin{equation} \label{R G1 comps}
R_{a,b}:  
\left\lbrace
\begin{aligned}
 u_{1} &= y_1- \frac{a-b}{1+x_1y_2}x_1 \,,  &  v_{1} &= x_{1}\,,  \\
 u_{2} &= y_2 \,,  &  v_{2} &=  x_2 + \frac{a-b}{1+x_1y_2}y_2  \,,   \\
 \xi_{1} &= \psi_{1} - \frac{a-b}{1+x_{1}y_{2}} \chi_1\,,   &  \eta_{1} &= \chi_{1} \,,  \\
 \xi_{2} &=  \psi_{2} \,,  &  \eta_{2} &= \chi_{2} + \frac{a-b}{1+x_1 y_2}\psi_2  \,,
\end{aligned}
\right.
\end{equation}

\begin{equation} \label{S G1 comps}
\hspace{-0.3cm} S_{a,b}:  
\left\lbrace
\begin{aligned}
 u_1 &= \frac{y_1^2}{b x_1} +\frac{ y_1(y_2-a)}{b} \,,  &   v_1 &= x_1\,,  \\
 u_2 &=  \frac{b}{y_1} \,, &  v_2 &=  a+x_1 x_2 - \frac{y_1}{x_1}  \,,\\
 \xi_1 &= \psi_1 - \frac{y_1}{x_1} \chi_1\,,   &   \eta_1 &= \chi_1 \,,\\
 \xi_2 &=  \psi_2  \,, & \eta_2 &= \chi_2 +\frac{y_1}{x_1} \psi_2  \,, 
\end{aligned}
\right.
\end{equation}

\begin{equation} \label{T G1 comps}
T_{a,b}:  
\left\lbrace
\begin{aligned}
u_{1} &= \frac{a}{y_{2}} \,,   &   v_{1} &= x_{1}\,,  \\
u_{2} &= b - \frac{a}{x_{1} y_{2}} + y_{1} y_{2}  \,,  &  v_{2} &=  \frac{a+ x_{1}y_{2}(x_{2}-b)}{x_{1}^2 y_{2}}  \,,   \\
 \xi_{1} &= \psi_{1} + \frac{a \chi_{1}}{x_{1}y_{2}}\,,    &   \eta_{1} &= \chi_{1} \,,  \\
 \xi_{2} &=  \psi_{2} \,,   &  \eta_{2} &= \chi_{2} - \frac{a \psi_{2}}{x_{1} y_{2}}  \,. 
\end{aligned}
\right.
\end{equation}
For simplicity, we have used the same letters for the variables in \eqref{R G1 comps}-\eqref{T G1 comps} as in the case of $\Gamma(n)$. From Theorem \ref{RST maps} it follows that the above maps satisfy the entwining YB equation \eqref{eYB_eq1}.

The eight-dimensional birational maps $R_{a,b}, S_{a,b}, T_{a,b}$ defined by \eqref{R G1 comps}-\eqref{T G1 comps} admit a strong Lax triple $(L_a, M_a, N_a)$,  with $N_a \equiv L_a$ and $L_a, M_a$ given by
\begin{equation}\label{L G1 tensor}
L_a(x_1,x_2,\chi_1,\chi_2)=
\begin{pmatrix}
x_1x_2+a+\lambda & 0&x_1&0&0&\chi_1\\
0&x_1x_2+a+\lambda&0&x_1&0&0\\
x_2&0&1&0&0&0\\
0&x_2&0&1&0&0\\
0&\chi_2&0&0&1&0\\
0&0&0&0&0&1
\end{pmatrix}
\end{equation}
and
\begin{equation} \label{M G1 tensor}
M_a(x_1,x_2,\chi_1,\chi_2)=
\begin{pmatrix}
x_2+\lambda&0&x_1&0&0&\chi_1\\
0&x_2+\lambda&0&x_1&0&0\\
\frac{a}{x_1}&0&0&0&0&0\\
0&\frac{a}{x_1}&0&0&0&0\\
0&\chi_2&0&0&1&0\\
0&0&0&0&0&1
\end{pmatrix},
\end{equation}
where $x_i, \chi_i \in \mathbb{F}$ for $i=1,2$. Namely, each of the matrix refactorisations 
\begin{subequations}\label{G1 refact}
\begin{align}
L_a(u_1,u_2,\xi_1,\xi_2) L_b(v_1,v_2,\eta_1,\eta_2) &=L_b(y_1,y_2,\psi_1,\psi_2)L_a(x_1,x_2,\chi_1,\chi_2)\,, \label{R G1 refact}\\
L_a(u_1,u_2,\xi_1,\xi_2) M_b(v_1,v_2,\eta_1,\eta_2) &=M_b(y_1,y_2,\psi_1,\psi_2)L_a(x_1,x_2,\chi_1,\chi_2)\,, \label{S G1 refact}\\
M_a(u_1,u_2,\xi_1,\xi_2) L_b(v_1,v_2,\eta_1,\eta_2) &=L_b(y_1,y_2,\psi_1,\psi_2)M_a(x_1,x_2,\chi_1,\chi_2)\,,\label{T G1 refact}
\end{align}
\end{subequations}
leads to a system of polynomial equations which can be solved uniquely for $(u_1,u_2,\xi_1,\xi_2,v_1,v_2,\eta_1,\eta_2)$  leading to maps \eqref{R G1 comps}-\eqref{T G1 comps}, respectively.

The Lax matrices $L_a\,, M_a$ in \eqref{L G1 tensor}, \eqref{M G1 tensor} are obtained by using the tensor product of $\mbox{Mat}_3(\mathbb{F})$ with a representation of $\Gamma(1)$. In particular, first we express the Lax supermatrices $\mathcal{L}_a$ and $\mathcal{M}_a$ in \eqref{Lax} as elements in $\mbox{Mat}_3(\mathbb{F}) \otimes \Gamma(1)$ by expanding their entries in terms of $1$ and $\theta$,  therefore writing them as
\begin{equation}
\mathcal{L}_a({\bm x}, {\bm \chi}) = L_1 \otimes 1+ L_2 \otimes  \theta \,,  \quad  \mathcal{M}_a({\bm x}, {\bm \chi}) = M_1 \otimes 1  + M_2 \otimes \theta\,,
\end{equation}
with 
\begin{equation*}
L_1= 
\begin{pmatrix}
x_1x_2 + a + \lambda &x_1&0\\
x_2&1&0\\
0&0&1
\end{pmatrix}, \; 
M_1=
\begin{pmatrix}
x_2+\lambda&x_1&0\\
\frac{a}{x_1}&0&0\\
0&0&1
\end{pmatrix}, \;
L_2 = M_2=
\begin{pmatrix}
0&0&\chi_1\\
0&0&0\\
\chi_2&0&0
\end{pmatrix}.
\end{equation*}
Then, we represent $1$ and $\theta$ by $2 \times 2$ matrices using the algebra homomorphism $\rho: \Gamma(1) \rightarrow \mbox{Mat}_2(\mathbb{F})$ defined by its action on the basis of the algebra
\begin{equation}
1 \overset{\rho}{\mapsto}
\begin{pmatrix}
1&0\\ 0&1
\end{pmatrix}, \quad 
 \theta \overset{\rho}{\mapsto}
\begin{pmatrix}
0&1\\0&0
\end{pmatrix},
\end{equation}
and thus we obtain $L_a, M_a$ in \eqref{L G1 tensor} and \eqref{M G1 tensor} from
\begin{equation} \label{Lax G1 tensor 1}
L_a = L_1 \otimes \rho(1) + L_2 \otimes \rho(\theta) \, \mbox{ and } \, M_a = M_1 \otimes \rho(1) + M_2 \otimes \rho(\theta)\,.
\end{equation}

\begin{remark}
The matrix refactorisation problems in \eqref{G1 refact} with 
\begin{equation}\label{Lax G1 tensor 2}
\tilde{L}_a = \rho(1) \otimes L_1  + \rho(\theta) \otimes  L_2\,, \quad \tilde{M}_a = \rho(1) \otimes M_1+ \rho(\theta) \otimes M_2 
\end{equation}
are also equivalent to the maps $R_{a,b}, S_{a,b}, T_{a,b}$ in \eqref{R G1 comps}-\eqref{T G1 comps}, since the Lax matrices $L_a, \, M_a$ in \eqref{Lax G1 tensor 1} and $\tilde{L}_a, \, \tilde{M}_a$ in \eqref{Lax G1 tensor 2} are similar under a permutation matrix.
\end{remark}

The expansion in the basis of $\Gamma(1)$ can also be performed for the $\mathcal{I}, \mathcal{J}$ families of invariants in \eqref{R inv}, \eqref{S inv} of maps $\mathcal{R}_{a,b}$ and $\mathcal{S}_{a,b}$ leading to invariants of $R_{a,b}$, $S_{a,b}$. This way, invariants $\mathcal{I}_i$ and $\mathcal{J}_i$, $i=1, \ldots, 5$, are expressed in terms of $1$ and powers of $\theta$. It is interesting to note that, while $\theta^k = 0$ for $k \geq 2$, the coefficients of $\theta^k$ in the expansions can still be invariant quantities.  For example, the invariant $\mathcal{I}_2 = \chi_1\chi_2 + \psi_1\psi_2$ of $\mathcal{R}_{a,b}$ is expressed as $\mathcal{I}_2=I_2\theta^2$ in $\Gamma(1)$. While $\theta^2=0$, it turns out that $I_2$, which involves the commutative coefficients of $\chi_i, \psi_i \in\Gamma(1)$, is an invariant of map $R_{a,b}$ \eqref{R G1 comps}. Following this idea, we obtain the following sets of functionally independent invariants for $R_{a,b}$ and $S_{a,b}$, respectively:
\begin{equation} \label{R inv G1}
\begin{aligned}
I_1 &= x_1x_2 + y_{1}y_{2} \,, \quad I_{2} = \chi_{1} \chi_{2} + \psi_{1} \psi_{2}\,, \quad I_3 = (x_1 \psi_1 - y_1\chi_1)(x_2\psi_2 - y_2 \chi_2) \,,\\
I_{4} &= b x_1x_2+ay_1y_2+x_1x_2y_1y_2+x_1y_2+x_2y_1 \,, 
\end{aligned}
\end{equation}
and
\begin{equation} \label{S inv G1}
\hspace{-0.5cm} 
\begin{aligned}
J_{1} &= y_2+x_1x_2 \,,\quad J_{2} = \chi_{1} \chi_{2} + \psi_{1} \psi_{2} \,, \\
J_{3}&= (a+x_1x_2)\psi_1 \psi_2 + (b+y_2) \chi_1 \chi_2 + (1-x_2y_1)\chi_1 \psi_2 + \left( 1-\frac{b x_1}{y_1} \right)\psi_1  \chi_2\,, \\
J_{4} &= b \frac{x_1}{y_1}+ y_2(a+ x_1x_2) + x_2y_1\,.
\end{aligned}
\end{equation}
We note that the invariants $\mathcal{I}_5$ and $\mathcal{J}_5$ of $\mathcal{R}_{a,b}, \mathcal{S}_{a,b}$ do not produce any invariants for $R_{a,b}, S_{a,b}$.

\begin{remark}
The invariants $I_1, I_4$ of $R_{a,b}$ can also be obtained from the characteristic function of the monodromy matrix $P_{R} = L_b(y_1, y_2, \psi_1, \psi_2)L_a(x_1, x_2, \chi_1, \chi_2)$. Similarly,  $J_1, J_4$ can be obtained from the characteristic function  $P_{S} = M_b(y_1, y_2, \psi_1, \psi_2)L_a(x_1, x_2, \chi_1, \chi_2)$ associated to map $S_{a,b}$.
\end{remark}

Maps $R_{a,b}, S_{a,b}, T_{a,b}$ in \eqref{R G1 comps}-\eqref{T G1 comps} can be written in `triangular form', meaning that each of them can be expressed as the composition of a linear map which acts on $(\chi_1, \chi_2, \psi_1, \psi_2)$ with coefficients being rational functions of $(x_1, x_2, y_1, y_2)$, and a map that acts non-trivially only on the variables $(x_1, x_2, y_1, y_2)$. For example, for $R_{a,b}$ we have the decomposition $R_{a,b} = \bar{R}_{a,b} \circ \hat{R}_{a,b}$, where 
$$
\hat{R}_{a,b}: 
(x_1,x_2,\chi_1,\chi_2,y_1,y_2,\psi_1,\psi_2)\mapsto (x_1,x_2,\xi_1,\xi_2,y_1,y_2,\eta_1,\eta_2)
$$
with 
$$
(\xi_1,\xi_2, \eta_1, \eta_2) = \left( \psi_1+\frac{b-a}{1+x_1y_2}\chi_1,  \chi_1, \chi_2+\frac{a-b}{1+x_1y_2}\psi_2,  \psi_2  \right)\,,
$$
and $\bar{R}_{a,b}$ an extension of the Adler-Yamilov map \cite{adleryamilov1994}
$$
\bar{R}_{a,b}: 
(x_1,x_2,\chi_1,\chi_2,y_1,y_2,\psi_1,\psi_2)\mapsto (u_1,u_2,\chi_1,\chi_2,v_1,v_2,\psi_1,\psi_2)
$$
with
$$
(u_1,u_2,v_1,v_2)=\left( y_1 + \frac{b-a}{1+x_1y_2}x_1, y_2, x_1, x_2+\frac{a-b}{1+x_1y_2}y_2\right ).
$$
In these decompositions of $R_{a,b}, S_{a,b}, T_{a,b}$ the rational maps which act only on the variables $(x_1, x_2, y_1, y_2)$ are extensions of the maps $\mathtt{R}_{a,b}, \mathtt{S}_{a,b}, \mathtt{T}_{a,b}$ in \eqref{YB KRP}, which were shown to be Liouville integrable in \cite{KRP2019}. 

Finally, regarding the dynamical properties of maps $R_{a,b}, S_{a,b}, T_{a,b}$ in \eqref{R G1 comps}-\eqref{T G1 comps},  we show that these maps are measure preserving. This means that for each of them there exists a function $m$ of the dynamical variables such that the Jacobian determinant $J$ of the map can be written as \cite{DiscreteBook}
$$
J = \frac{m(x_1,x_2,\chi_1, \chi_2, y_1,y_2, \psi_1,\psi_2)}{m(u_1, u_2, \xi_1, \xi_2,v_1, v_2, \eta_1, \eta_2)}\,.
$$
It can be verified that maps $R_{a,b}, S_{a,b}, T_{a,b}$ in \eqref{R G1 comps}-\eqref{T G1 comps} are measure preserving with $m(x_1,x_2, \chi_1, \chi_2, y_1, y_2, \psi_1, \psi_2)$ equal to $1$, $\frac{1}{y_1}$ and $\frac{1}{x_1}$, respectively. In particular, the YB map $R_{a,b}$ is volume preserving.

\section{The Grassmann algebra $\Gamma(2)$} \label{sec: n2}

In this section we derive 16-dimensional birational, parametric, entwining YB maps  $\mathsf{R}_{a,b}, \mathsf{S}_{a,b}, \mathsf{T}_{a,b}:\mathbb{F}^{16} \rightarrow \mathbb{F}^{16}$ which act as
\begin{eqnarray}
(x_{11}, x_{12}, x_{21}, x_{22}, \chi_{11}, \chi_{12}, \chi_{21}, \chi_{22}, y_{11}, y_{12}, y_{21}, y_{22}, \psi_{11}, \psi_{12}, \psi_{21}, \psi_{22}) \mapsto \nonumber \\
(u_{11}, u_{12}, u_{21}, u_{22}, \xi_{11}, \xi_{12}, \xi_{21}, \xi_{22}, v_{11}, v_{12}, v_{21}, v_{22}, \eta_{11}, \eta_{12}, \eta_{21}, \eta_{22}) \,. \nonumber
\end{eqnarray}
These maps are obtained from maps $\mathcal{R}_{a,b}, \mathcal{S}_{a,b}, \mathcal{T}_{a,b}$ in \eqref{R comps}-\eqref{T comps} for the case of the Grassmann algebra $\Gamma(2)$, following the ideas presented in Section \ref{sec: n1}. Unlike the case of maps $R_{a,b}, S_{a,b}, T_{a,b}$ obtained in the previous section, the maps presented here are not in `triangular form'. More precisely, although the maps act linearly on $(\chi_{11},\chi_{12}, \chi_{21}, \chi_{22}, \psi_{11}, \psi_{12}, \psi_{21}, \psi_{22})$ with coefficients which are rational functions of only $(x_{11},x_{12}, x_{21}, x_{22}, y_{11},y_{12}, y_{21}, y_{22})$, their action on  $(x_{11},x_{12}, x_{21}, x_{22}, y_{11},y_{12}, y_{21}, y_{22})$ has coefficients which are functions of all the dynamical variables $x_{ij}, \chi_{ij}, y_{ij}, \psi_{ij}$ for $i, j \in \{1,2\}$. Similar to the case of the 8-dimensional birational maps which were derived in Section \ref{sec: n1},  the 16-dimensional maps $\mathsf{R}_{a,b}, \mathsf{S}_{a,b}, \mathsf{T}_{a,b}$ obtained in this section admit a strong Lax triple, are measure preserving, and each of them has a family of invariants.

The elements of the $\Gamma(2)$ Grassmann algebra can be written as linear combinations of  $1, \theta_1, \theta_2, \theta_1\theta_2$ with $\theta_i\theta_j+\theta_j\theta_i = 0$ for $i,j=1,2$. Expressing each of the components of a point $({\bm x}, {\bm \chi})=(x_1, x_2, \chi_1, \chi_2) \in \mathbb{F}_2^{2,2}$  in the basis of $\Gamma(2)$ we have 
\begin{equation}
x_i = x_{i1}  + x_{i2} \theta_1\theta_2 \,, \quad \chi_i = \chi_{i1} \theta_1 + \chi_{i2}\theta_2 \,,
\end{equation}
and for even elements we have
\begin{equation}
x_i^{-1} = \frac{1}{x_{i1}} - \frac{x_{i2}}{x_{i1}^2} \theta_1 \theta_2 \,, \quad \mbox{with} \quad  x_{ij}, \chi_{ij} \in \mathbb{F} \quad \mbox{for} \quad  i, j \in \left\lbrace 1, 2 \right\rbrace. 
\end{equation}
Starting from maps $\mathcal{R}_{a,b}, \mathcal{S}_{a,b}, \mathcal{T}_{a,b}$ that were derived in Section \ref{sec: YB maps}, we express each of their variables as described above.  By comparing coefficients of $1, \theta_1\theta_2$ and $\theta_1, \theta_2$ on both sides of each of the equations \eqref{R comps}-\eqref{T comps}, we obtain sixteen-dimensional maps which we denote by $\mathsf{R}_{a,b}, \mathsf{S}_{a,b}, \mathsf{T}_{a,b}$, respectively. The components of these maps are given by
\begin{equation} \label{RG2}
\mathsf{R}_{a,b}: 
\begin{cases}
u_{11} =y_{11} -  \frac{a-b}{1+x_{11}y_{21}}x_{11} \,, &  v_{11} = x_{11}\,,  \vspace{0.15cm} \\
u_{12} = y_{12} - \frac{(a-b)(x_{12} - x_{11}(x_{11}y_{22}+\chi_{11}\psi_{22} - \chi_{12}\psi_{21}))}{(1+x_{11}y_{21})^2} \,, & v_{12} = x_{12} \,,\\
u_{21} = y_{21} \,, & v_{21} = x_{21}+ \frac{a-b}{1+x_{11}y_{21}}y_{21}  \,,  \vspace{0.15cm}  \\
u_{22} = y_{22} \,, & v_{22} = x_{22} + \frac{(a-b)(y_{22} - y_{21}(x_{12}y_{21}+\chi_{11}\psi_{22}-\chi_{12}\psi_{21}))}{(1+x_{11}y_{21})^2}\,,\\
\xi_{11} = \psi_{11} - \frac{a-b}{1+x_{11}y_{21}}\chi_{11}\,,   & \eta_{11} = \chi_{11} \,,  \\
 \xi_{12} = \psi_{12} - \frac{a-b}{1+x_{11}y_{21}} \chi_{12} \,, & \eta_{12} =\chi_{12} \,,\\
 \xi_{21} =  \psi_{21} \,,  & \eta_{21} = \chi_{21} + \frac{a-b}{1+x_{11} y_{21}} \psi_{21} \,, \\
 \xi_{22} = \psi_{22} \,, & \eta_{22} = \chi_{22} + \frac{a-b}{1+x_{11}y_{21}}\psi_{22} \,,
\end{cases}
\end{equation}
\begin{equation} \label{SG2}
\mathsf{S}_{a,b}: 
\begin{cases}
u_{11} =  \frac{y_{11}^2}{b x_{11}} + \frac{y_{11}(y_{21}-a)}{b} \,,&  v_{11} = x_{11}\,,  \vspace{0.15cm} \\
u_{12} = - \frac{x_{12}y_{11}^2}{b x_{11}^2} + \frac{y_{11} (2 y_{12}+y_{11}(\chi_{12}\psi_{21} + \chi_{11} \psi_{22}))}{bx_{11}} \vspace{0.15cm} \\ \hphantom{123} + \frac{y_{12}(y_{21}-a)}{b} 
+ \frac{y_{11}(y_{22} + \psi_{12}\psi_{21} -\psi_{11}\psi_{22})}{b}\,,  & v_{12} = x_{12} \,,   \\
u_{21} = \frac{b}{y_{11}} \,,& v_{21} = a+ x_{11}x_{21} -\frac{y_{11}}{x_{11}} \,,   \\
u_{22} = - \frac{b y_{12}}{y_{11}^2}\,, & v_{22} = x_{11}x_{22} + x_{12}x_{21} + \frac{x_{12}y_{11}}{x_{11}^2} - \frac{y_{12}}{x_{11}} \\
& \hphantom{123} - \chi_{12}\chi_{21} + \chi_{11}\chi_{22}  \,,\\
\xi_{11} = \psi_{11} - \frac{y_{11}}{x_{11}} \chi_{11} \,,   &  \eta_{11} = \chi_{11} \,,  \\
 \xi_{12} = \psi_{12} - \frac{y_{11}}{x_{11}} \chi_{12} \,, &  \eta_{12} =\chi_{12} \,,\\
 \xi_{21} =  \psi_{21} \,,  &  \eta_{21} = \chi_{21} + \frac{y_{11}}{x_{11}}\psi_{21} \,, \\
 \xi_{22} = \psi_{22} \,, & \eta_{22} = \chi_{22} + \frac{y_{11}}{x_{11}}\psi_{22} \,,
\end{cases}
\end{equation}
and
\begin{equation} \label{TG2}
\mathsf{T}_{a,b}: 
\begin{cases}
u_{11} =  \frac{a}{y_{21}} \,,&  v_{11} = x_{11} \,,  \\
u_{12} = - \frac{a y_{22}}{y_{21}^2} \,,  & v_{12} = x_{12} \,,\\
u_{21} = b - \frac{a}{x_{11}y_{21}} + y_{11}y_{21} \,,& v_{21} = \frac{a+ x_{11}(x_{21}-b)y_{21} }{x_{11}^2 y_{21}}   \,, \vspace{0.15cm}  \\
u_{22} = y_{12}y_{21} + y_{11}y_{22}   \, & v_{22} = - \frac{a (2 x_{12} + x_{11}(\chi_{12}\psi_{21} - \chi_{11}\psi_{22}) )}{x_{11}^3y_{21}}  - \frac{a y_{22}}{x_{11}^2y_{21}^2}\vspace{0.15cm}  \\
\hphantom{123} + \frac{a(x_{12}y_{21}+x_{11}y_{22})}{x_{11}^2 y_{21}^2} + \psi_{11}\psi_{22} - \psi_{12}\psi_{21}\,,   &  \hphantom{123} -\frac{(x_{21}-b)x_{12} }{x_{11}^2}  + \frac{\left(x_{22}+\chi_{12}\chi_{21}-\chi_{11}\chi_{22}   \right)}{x_{11}}  \,, \vspace{0.15cm} \\
\xi_{11} = \psi_{11} + \frac{a \chi_{11}}{x_{11}y_{21}} \,,   &  \eta_{11} = \chi_{11} \,,  \\
 \xi_{12} = \psi_{12} + \frac{a \chi_{12}}{x_{11}y_{21}}  \,, &  \eta_{12} =\chi_{12} \,,\\
 \xi_{21} =  \psi_{21}  \,,  &  \eta_{21} = \chi_{21} -\frac{a \psi_{21}}{x_{11}y_{21}} \,, \\
 \xi_{22} = \psi_{22}  \,, & \eta_{22} = \chi_{22} - \frac{a \psi_{22}}{x_{11} y_{21}}  \,.
\end{cases}
\end{equation}
In what follows we discuss integrability properties of the entwining YB maps given by \eqref{RG2}-\eqref{TG2}, focusing on maps $\mathsf{R}_{a,b}$ and $\mathsf{S}_{a,b}$ due to our previous observation in Remark \ref{ST equiv}.

Invariants of maps $\mathsf{R}_{a,b}$ and $\mathsf{S}_{a,b}$ can be obtained starting from the families of invariants $\mathcal{I}$, $\mathcal{J}$ of $\mathcal{R}_{a,b}$, $\mathcal{S}_{a,b}$, respectively. Following the ideas discussed in the previous section, we first express the variables that appear in  invariants $\mathcal{I}_i$ and $\mathcal{J}_i$ in terms of $1, \theta_1\theta_2$ and $\theta_1, \theta_2$. Then the coefficients of $1$, $\theta_1 \theta_2$ as well as those of powers of $\theta_1$ and $\theta_2$ can lead to invariants for the 16-dimensional maps \eqref{RG2}, \eqref{SG2}. In particular, using only invariants $\mathcal{I}_1, \mathcal{I}_2$ and $\mathcal{I}_4$ from the list \eqref{R inv} we obtain the following functionally independent invariants for map $\mathsf{R}_{a,b}$
\begin{align}\label{R invs G2} 
\mathsf{I}_{1} &= x_{11}x_{21} + y_{11} y_{21} \,, ~ \quad \mathsf{I}_{2} = x_{11}x_{22} + x_{12}x_{21} + y_{11}y_{22} + y_{12}y_{21} \,, \nonumber \\
\mathsf{I}_{3} & = \chi_{11}\chi_{22}  + \psi_{11}\psi_{22}  \,, \quad \mathsf{I}_{4} =  \chi_{12}\chi_{21} +  \psi_{12}\psi_{21}\,, \nonumber \\
 \mathsf{I}_5 &= \chi_{11} \chi_{21} + \psi_{11} \psi_{21} \,, \quad \mathsf{I}_6 = \chi_{12} \chi_{22} + \psi_{12} \psi_{22} \,, \\
 \mathsf{I}_{7} &=  b x_{11}x_{21} + a y_{11}y_{21} +x_{11}y_{21} + x_{21}y_{11} + x_{11}x_{21}y_{11}y_{21}\,, \nonumber \\
\mathsf{I}_{8} &= b(x_{11}x_{22}+x_{12}x_{21} + \chi_{11}\chi_{22} - \chi_{12}\chi_{21}) + a(y_{11}y_{22}+y_{12}y_{21} +\psi_{11}\psi_{22} - \psi_{12}\psi_{21}) \nonumber  \\
& +  x_{11}x_{21} (y_{11}y_{22}+y_{12}y_{21}) + y_{11}y_{21}(x_{11}x_{22} + x_{12}x_{21}) + y_{11}y_{21}(\chi_{11}\chi_{22} -\chi_{12}\chi_{21}) \nonumber \\
& +  x_{11}x_{21}(\psi_{11}\psi_{22} - \psi_{12}\psi_{21}) + x_{11}y_{22} + x_{22}y_{11} + x_{21}y_{12} + x_{12}y_{21} +\chi_{11}\psi_{22}\nonumber  \\ &+ \chi_{22}\psi_{11} -  \chi_{12}\psi_{21} - \chi_{21}\psi_{12} \,. \nonumber
\end{align}
Moreover, expanding the anti-invariants given in Remark \ref{rem: anti-inv} in the basis of $\Gamma(2)$ we obtain the following six anti-invariants of map $\mathsf{R}_{a,b}$
\begin{equation}\label{R anti-invs G2}
\mathsf{A}_{ij} = x_{i 1}\psi_{i j} - y_{i 1} \chi_{i j} ~~ \mbox{and} ~~ \mathsf{B}_{i} = \chi_{i 1} \psi_{i2} - \chi_{i 2} \psi_{i 1} \,, \quad \mbox{for} ~ i, j = 1, 2\,.
\end{equation}
The squares of $\mathsf{A}_{ij}$ and $\mathsf{B}_{i}$, as well as any product of two of them is an invariant of map $\mathsf{R}_{a,b}$. Obviously, not all of invariants \eqref{R invs G2} and those obtained from combinations of the anti-invariants \eqref{R anti-invs G2} form a generating set for the ring of invariants of $\mathsf{R}_{a,b}$, since, for example, the invariants $\mathsf{A}_{ij}^2, \mathsf{B}_k^2$ and $\mathsf{A}_{ij} \mathsf{B}_k$ satisfy the syzygy $(\mathsf{A}_{ij}\mathsf{B}_k)^2=(\mathsf{A}_{ij})^2(\mathsf{B}_k)^2$.
Similarly, we use the invariants $\mathcal{J}_1 - \mathcal{J}_4$ in \eqref{S inv} to find the following functionally independent invariants of map $\mathsf{S}_{a,b}$ 
\begin{align}\label{S G2 inv}
\mathsf{J}_1 &= y_{21} + x_{11}x_{21} \,, \quad \mathsf{J}_2 = y_{22} + x_{11}x_{22} + x_{12}x_{21} + \chi_{11}\chi_{22} - \chi_{12}\chi_{21} \,, \nonumber\\
\mathsf{J}_{3} & = \chi_{11}\chi_{22}  + \psi_{11}\psi_{22}  \,, \quad \mathsf{J}_{4} =  \chi_{12}\chi_{21} +  \psi_{12}\psi_{21}\,, \nonumber\\
 \mathsf{J}_5 &= \chi_{11} \chi_{21} + \psi_{11} \psi_{21} \,, \quad \mathsf{J}_6 = \chi_{12} \chi_{22} + \psi_{12} \psi_{22} \,,\nonumber \\
  \mathsf{J}_7 & = (a + x_{11}x_{21})( \psi_{11}\psi_{22} - \psi_{12}\psi_{21}) + (b + y_{21})(\chi_{11}\chi_{22} - \chi_{12}\chi_{21}) \nonumber \\
 & + (1- x_{21}y_{11})(\chi_{11}\psi_{22} - \chi_{12}\psi_{21}) + \left( 1- \frac{b x_{11}}{y_{11}} \right)( \chi_{22}\psi_{11} - \chi_{21}\psi_{12})\,, \\
 \mathsf{J}_8 &= (a+x_{11} x_{21}) (\psi_{11} \psi_{21} - \psi_{12} \psi_{22}) +(b+y_{21})(\chi_{11}\chi_{21} - \chi_{12}\chi_{22}) \nonumber \\ 
 & +(1-x_{21}y_{11})( \chi_{11}\psi_{21} - \chi_{12}\psi_{22})+ \left(1- \frac{b x_{11}}{y_{11}} \right)( \psi_{11} \chi_{21} -\psi_{12} \chi_{22} ) \,,\nonumber \\
 \mathsf{J}_{9} &= b \frac{x_{11}}{y_{11}} + y_{21}(a+x_{11}x_{21}) +x_{21}y_{11}  \,,\nonumber \\
 \mathsf{J}_{10} &= b \left( \frac{x_{12}}{y_{11}} - \frac{x_{11}y_{12}}{y_{11}^2}\right)  + y_{21}(x_{12}x_{21} + x_{11}x_{22} + \chi_{11}\chi_{22}- \chi_{12}\chi_{21}) + y_{22}(a + x_{11}x_{21}) \nonumber \\
 &  + x_{21}y_{12} + x_{22}y_{11} + \chi_{11}\psi_{22} - \chi_{12}\psi_{21} + \chi_{22}\psi_{11} - \chi_{21}\psi_{12} \,. \nonumber
\end{align}
Additionally, from invariant $\mathcal{J}_5$ we find that $\mathsf{B}_i$, for $i=1,2$, in \eqref{R anti-invs G2} are anti-invariants of map $\mathsf{S}_{a,b}$.

Following the ideas in Section \ref{sec: n1}, we construct a strong Lax triple $(\mathsf{L}_a, \mathsf{M}_a, \mathsf{N}_a)$, with $\mathsf{N}_a \equiv \mathsf{L}_a$, for maps $\mathsf{R}_{a,b}, \mathsf{S}_{a,b}, \mathsf{T}_{a,b}$ \eqref{RG2}-\eqref{TG2}. We start by expressing the Lax matrices with Grassmann variables $\mathcal{L}_a, \mathcal{M}_a$ in \eqref{Lax} in the basis of $\Gamma(2)$ as
\begin{equation} \label{L G2 basis}
\begin{split}
\mathcal{L}_{a}({\bm x}, {\bm \chi})& = \mathsf{L}_1 \otimes 1 + \mathsf{L}_2 \otimes \theta_1\theta_2 + \mathsf{L}_3 \otimes \theta_1 + \mathsf{L}_4 \otimes \theta_2\,, \\
\mathcal{M}_{a}({\bm x}, {\bm \chi}) &= \mathsf{M}_1 \otimes 1 + \mathsf{M}_2 \otimes \theta_1\theta_2 + \mathsf{M}_3 \otimes \theta_1 + \mathsf{M}_4 \otimes \theta_2 \,,
\end{split}
\end{equation}
with the coefficients $\mathsf{L}_i\,, \mathsf{M}_i$ given by 
\begin{equation} \label{L134 G2}
\mathsf{L}_1 = 
\begin{pmatrix}
x_{11} x_{21} + a + \lambda & x_{11} & 0\\
x_{21} & 1 & 0\\
0&0&1
\end{pmatrix}  \!, \;
\mathsf{L}_3 = 
\begin{pmatrix}
0 & 0 & \chi_{11}\\
0 & 0 & 0\\
\chi_{21} & 0 & 0
\end{pmatrix} \!, \;
\mathsf{L}_4= 
\begin{pmatrix}
0 & 0 & \chi_{12}\\
0 & 0 & 0\\
\chi_{22} & 0 & 0
\end{pmatrix} \!,
\end{equation}
\begin{equation}\label{L2 G2}
\mathsf{L}_2=
\begin{pmatrix}
x_{11} x_{22} + x_{12}x_{21} + \chi_{11}\chi_{22} - \chi_{12} \chi_{21} &  x_{12} &  0\\
 x_{22}& 0 &0\\
0 & 0 & 0
\end{pmatrix} \!, 
\end{equation}
and
\begin{equation} \label{M_i G2}
\mathsf{M}_1 = 
\begin{pmatrix}
x_{21} + \lambda & x_{11} & 0 \\
\frac{a}{x_{11}} & 0 & 0 \\
0 & 0 & 1
\end{pmatrix} \!, \quad 
\mathsf{M}_2 = 
\begin{pmatrix}
x_{22} & x_{12} & 0\\
- \frac{a x_{12}}{x_{11}^2} & 0 & 0\\
0 & 0 & 0
\end{pmatrix}\!, \quad \mathsf{M}_3 = \mathsf{L}_3\,, \quad \mathsf{M}_4 = \mathsf{L}_4\,. 
\end{equation}
Then, using the algebra homomorphism $\rho:\Gamma(2) \rightarrow \mbox{Mat}_4(\mathbb{F})$ defined by
\begin{equation}
1 \overset{\rho}{\mapsto} 
\begin{pmatrix}
1&0&0&0\\
0&1&0&0\\
0&0&1&0\\
0&0&0&1
\end{pmatrix}, \;
 \theta_1 \overset{\rho}{\mapsto}
\begin{pmatrix}
0&1&0&0\\
0&0&0&0\\
0&0&0&1\\
0&0&0&0
\end{pmatrix}, \;
 \theta_2 \overset{\rho}{\mapsto} 
\begin{pmatrix}
0&0&1&0\\
0&0&0&-1\\
0&0&0&0\\
0&0&0&0
\end{pmatrix},
\end{equation}
we obtain a strong Lax triple $(\mathsf{L}_a, \mathsf{M}_a, \mathsf{L}_a)$ for maps $\mathsf{R}_{a,b}, \mathsf{S}_{a,b}, \mathsf{T}_{a,b}$ with matrices $\mathsf{L}_a, \mathsf{M}_a$ given by
\begin{equation} \label{Lax G2}
\begin{aligned}
\mathsf{L}_{a}  &= \mathsf{L}_1 \otimes \rho(1) + \mathsf{L}_2 \otimes \rho(\theta_1)\rho(\theta_2) + \mathsf{L}_3 \otimes \rho(\theta_1) + \mathsf{L}_4 \otimes \rho(\theta_2) \\
\mathsf{M}_{a} & = \mathsf{M}_1 \otimes \rho(1) + \mathsf{M}_2 \otimes \rho(\theta_1)\rho(\theta_2) + \mathsf{M}_3 \otimes \rho(\theta_1) + \mathsf{M}_4 \otimes \rho(\theta_2) \,,
\end{aligned}
\end{equation}
and $\mathsf{L}_i, \mathsf{M}_i$, $i=1, \ldots, 4$ given in \eqref{L134 G2}, \eqref{L2 G2} and \eqref{M_i G2}. The 16-dimensional maps $\mathsf{R}_{a,b}$ and $\mathsf{S}_{a,b}$ in \eqref{RG2}, \eqref{SG2} are equivalent to the matrix refactorisation problems of the $12 \times 12$ Lax matrices \eqref{Lax G2}, similar to the refactorisation problems \eqref{G1 refact} in Section \ref{sec: n1}.

Finally, similar to the $\Gamma(1)$ case, each of the maps $\mathsf{R}_{a,b}, \mathsf{S}_{a,b}, \mathsf{T}_{a,b}$ in \eqref{RG2}-\eqref{TG2} admits an invariant measure $m$. These measures are $m=1$ for $\mathsf{R}_{a,b}$, $m=\frac{1}{y_{11}^2}$ for $\mathsf{S}_{a,b}$, and $m=\frac{1}{x_{11}^2}$ for $\mathsf{T}_{a,b}$. In particular, we observe that the commutative consequences of map $\mathcal{R}_{a,b}$ for $n=1$ and $n=2$, that is maps $R_{a,b}$ and $\mathsf{R}_{a,b}$, are volume preserving maps. We conjecture that in $\Gamma(n)$ the map $\mathcal{R}_{a,b}$ is volume preserving for every $n$, while the maps $S_{a,b}$ and $T_{a,b}$ preserve measures of the form $y_{11}^{-n}$ and $x_{11}^{-n}$, respectively, with $x_{11}$ and $y_{11}$ defined similar to the cases $n=1$ and $n=2$.

\section{Conclusions}\label{sec: concl}

We have constructed birational maps $\mathcal{R}_{a,b},~ \mathcal{S}_{a,b},~ \mathcal{T}_{a,b}$ with Grassmann variables given in \eqref{R comps}-\eqref{T comps}, which satisfy the set-theoretical entwining YB equation \eqref{eYB_eq1}. These maps admit a strong Lax triple, which we used to derive invariants for the maps. The invariants that we find for map $\mathcal{R}_{a,b}$ are all polynomial, while those of maps $\mathcal{S}_{a,b}$ and $\mathcal{T}_{a,b}$ are Laurent polynomials with negative powers appearing only on even variables of the Grassmann algebra. Reversing this point of view, one could make connections with non-commutative algebraic geometry by viewing the maps as birational automorphisms of non-commutative algebraic varieties.

In Sections \ref{sec: n1} and \ref{sec: n2} we have shown how a hierarchy of birational entwining YB maps in dimensions $2^{n+2}$, where $n$ is the order of the Grassmann algebra, can be obtained. The case $n=0$, i.e. when there are no fermionic variables, was studied in \cite{KRP2019}. Here, we considered in detail the cases where $n=1, 2$, thus obtaining birational maps over $\mathbb{F}^8$ and $\mathbb{F}^{16}$ that satisfy the entwining Yang-Baxter equation. We have derived sufficient number of independent invariants of these maps to claim their Liouville integrability, however, we have not yet been able to find a Poisson structure. Nevertheless, there are indications which point towards the integrability of the maps with commutative variables. We have found that these maps are measure preserving, and some preliminary numerical experiments show no existence of chaos. Moreover, we have written each of the $8$-dimensional maps of Section \ref{sec: n1} as a composition of a Liouville integrable map with a linear map. More insight regarding the integrability of the maps could be gained using other methods, such as singularity confinement or algebraic entropy. All $2^{n+2}-$dimensional maps arise from refactorization problems of Lax matrices, which we present for $n=1$ and $n=2$. It would be interesting to study the associated transfer maps \'a la Veselov \cite{Veselov2003} for each $n$. Finally, defining appropriately the concept of Liouville integrability in the setting of Grassmann-extended entwining YB maps is an interesting open problem.

\section*{Acknowledgements}
We would like to thank Dr S. Konstantinou-Rizos for very fruitful discussions and suggestions in the initial stages of this work.  We would also like to thank the organisers of the ISLAND VI:  Dualities and Symmetries in Integrable Systems for the inspiring atmosphere of the conference, during which part of this work was completed. 


\appendix

\section{Proof of Theorem \ref{RST maps}} \label{proof EYB}
The proof of Theorem \ref{RST maps} is based on Proposition 3.1 in \cite{KP2011}. To prove that the maps \eqref{R comps}-\eqref{T comps} satisfy the entwining YB equation \eqref{eYB_eq1} we have to show that the equation 
\begin{equation} \label{trifact Append}
\mathcal{L}_a(\bm x,\bm \chi) \mathcal{M}_b  (\bm y,\bm \psi) \mathcal{L}_c (\bm z,\bm \zeta) = \mathcal{L}_a(\bm x',\bm \chi') \mathcal{M}_b  (\bm y',\bm \psi') \mathcal{L}_c (\bm z',\bm \zeta')\,, 
\end{equation}
with $\mathcal{L}_a(\bm x,\bm \chi)$ and $\mathcal{M}_a(\bm x,\bm \chi) $ given in \eqref{Lax}, implies $(\bm x,\bm \chi)=(\bm x',\bm \chi'), \, (\bm y,\bm \psi)= (\bm y',\bm \psi')$ and $(\bm z,\bm \zeta)=(\bm z',\bm \zeta')$. Here all the ordered pairs, e.g. $(\bm{ x, \chi})= (x_1,x_2,\chi_1,\chi_2)$,  are in $\mathbb{F}_n^{2,2}$. 

\begin{proof}
We use the standard notation of $e_{ij}$ denoting the matrix with $1$ in the $(i,j)$ entry and $0$ elsewhere. Then the Lax matrices $\mathcal{L}_a(\bm x, \bm \chi)$ and $\mathcal{M}_b(\bm y, \bm \psi)$ are of the form
$$
\mathcal{L}_a(\bm x, \bm \chi) = \lambda e_{11}+A_a(\bm x, \bm \chi), \quad \mathcal{M}_b(\bm y, \bm \psi)=\lambda e_{11}+B_b(\bm y, \bm \psi)
$$
where 
$$
A_a(\bm x, \bm \chi)=
\begin{pmatrix}
    x_1x_2+\chi_1\chi_2+a & x_1 & \chi_1 \\
    x_2 & 1 & 0 \\
    \chi_2 & 0 & 1
\end{pmatrix} \quad \mbox{and} \quad
B_b(\bm y, \bm \psi)= 
\begin{pmatrix}
    y_2 & y_1 & \psi_1 \\
    \frac{b}{y_1} & 0 & 0 \\
    \psi_2 & 0 & 1
\end{pmatrix}.
$$
For simplicity we also introduce the notation
$$
X_a:= x_1x_2+\chi_1\chi_2+a, \quad Z_c:=z_1z_2+\zeta_1\zeta_2+c.
$$
Moreover, we introduce the operators $L_{e_{11}}$ and $R_{e_{11}}$ acting on $\mathrm{M}_{2,1}$ by left and right multiplication by $e_{11}$, respectively.  Since, $e_{11}^2=e_{11}$, the operators $L_{e_{11}}$ and $R_{e_{11}}$ are projections. More precisely, we have
that 
$$
L_{e_{11}}(P)=e_{11}P=\sum_{j=1}^3p_{1j}e_{1j}, \quad R_{e_{11}}(P)=Pe_{11}=\sum_{i=1}^3p_{i1}e_{i1},
$$
for any matrix $P=(p_{ij})\in \mathrm{M}_{2,1}$. It also follows that $L_{e_{11}}\circ R_{e_{11}}(P)=R_{e_{11}}\circ L_{e_{11}}(P)=p_{11}e_{11}$.

We denote the left hand side of $\eqref{trifact Append}$ by $\mathcal{Q}(\lambda)$ and expand it in powers of $\lambda$. We obtain that
$$
\mathcal{Q}(\lambda)=\lambda^3e_{11}+\lambda^2\mathcal{Q}_2+\lambda\mathcal{Q}_1+\mathcal{{Q}}_0,
$$
with $\lbrace\mathcal{Q}_i\rbrace_{i=0}^2$ given by the following expressions 
$$
\begin{array}{rcl}
   \mathcal{Q}_2  & =&  L_{e_{11}}\circ R_{e_{11}}\bigl(B_b(\bm y, \bm \psi)\bigr)+R_{e_{11}}\bigl(A_a(\bm x, \bm \chi)\bigr)+L_{e_{11}}\bigl(A_c(\bm z, \bm \zeta)\bigr) , \\
   \mathcal{Q}_1  & =&  A_a(\bm x,\bm \chi)R_{e_{11}}\bigl(B_b(\bm y,\bm \psi)\bigr) + A_a(\bm x, \bm \chi)L_{e_{11}}\bigl(A_c(\bm z,\bm \zeta)\bigr)+L_{e_{11}}\bigl(B_b(\bm y, \bm \psi)\bigr)A_c(\bm z, \bm \zeta) \\
   \mathcal{Q}_0  & =&  A_a(\bm x, \bm \chi)B_b(\bm y, \bm \psi)A_c(\bm z, \bm \zeta).
\end{array}
$$
Expanding the right hand side of equation \eqref{trifact Append} in $\lambda$, we obtain similar expressions which we denote by $\lbrace \mathcal{Q}_i'\rbrace_{i=0}^2$. It follows that \eqref{trifact Append} implies the matrix equations $\mathcal{Q}_i=\mathcal{Q}_i'$, for $i=0,1,2$.

Matrix equation $\mathcal{Q}_2=\mathcal{Q}_2'$ results in nontrivial equations only for the entries in the first column and the first row. Comparing the coefficients of the matrices $e_{21}$ and $e_{31}$ in $\mathcal{Q}_2$ and $\mathcal{Q}_2'$ gives
$$
x_2=x_2', \quad \chi_2=\chi_2'.
$$
Similarly, from the coefficients of $e_{12}$ and $e_{13}$ we obtain
$$
z_1=z_1', \quad \zeta_1=\zeta_1'.
$$

In the matrix equation $\mathcal{Q}_0=\mathcal{Q}_0'$ we focus on the equations that we obtain from the coefficients of $e_{22}$, $e_{23}$ and $e_{32}$. The equation that corresponds to $e_{22}$ reads
$$
\left(x_2y_2+\frac{b}{y_1}\right)z_1+x_2y_1=\left(x_2y'_2+\frac{b}{y'_1}\right)z_1+x_2y'_1,
$$
where we have used the fact that $x_2=x_2'$ and $z_1=z_1'$. The above equation is polynomial in $z_1$ and therefore it implies that
$$
y_1=y_1', \quad y_2=y_2'.
$$
Similarly, using the equations obtained from the coefficients of $e_{23}$ and $e_{32}$ we have that
$$
\psi_1=\psi_1', \quad \psi_2=\psi_2'.
$$

From the coefficient of $e_{12}$ in $\mathcal{Q}_1=\mathcal{Q}_1'$ we obtain the equation
$$
X_az_1+y_2z_1+y_1=X'_az_1+y_2z_1+y_1 \,,
$$
where we have used the previously obtained equalities between primed and non-primed variables. The latter equation implies that $X_a=X_a'$. Similarly, from the coefficients of $e_{21}$ of the same matrix equation we obtain $Z_c=Z_c'$.

The coefficients of $e_{21}$ and $e_{31}$ in matrix equation $\mathcal{Q}_0=\mathcal{Q}_0'$ give two equations involving $z_2,z'_2$ and $\zeta_2, \zeta'_2$. Using the fact that $Z_c=Z_c'$ these two equations can be written as the following homogeneous system:
$$
\begin{pmatrix}
    y_1 & \psi_1 \\
    \chi_2y_1 & 1+\chi_2\psi_1
\end{pmatrix}
\begin{pmatrix}
    z_2'-z_2 \\
    \zeta_2'-\zeta_2
\end{pmatrix}=
\begin{pmatrix}
    0\\
    0
\end{pmatrix}.
$$
Since the supermatrix of coefficients of the above system is invertible, it follows that
$$
z_2=z_2', \quad \zeta_2=\zeta_2'. 
$$
Finally, from the equations that correspond to the elements $e_{12}$ and $e_{13}$, and using the fact that $X_a=X_a'$, we obtain a similar linear system that results to the remaining equalities
$$
x_1=x_1', \quad \chi_1=\chi_1'.
$$
\end{proof}

\newpage


\end{document}